\newcommand{\reductionAvg}{$43.8$\xspace}
\newcommand{\reductionMedian}{$40.0$\xspace}
\newcommand{\numIterationsMaxAvg}{$3.8$\xspace}
\newcommand{\excludedFacesAvg}{$76.5$\xspace}
\newcommand{\orthogonalZeroBends}{$129$\xspace}
\newcommand{\orZeroBends}{$550$\xspace}
\newcommand{\maxNumVertices}{$44$\xspace}
\newcommand{\minNumVertices}{$3$\xspace}
\newcommand{\timeLessHalfSecond}{$1081$\xspace}
\newcommand{\timeMoreTenSeconds}{$861$\xspace}
\newcommand{\timeMoreOneMinute}{$628$\xspace}
\newcommand{\numInstances}{$4048$\xspace}
\newcommand{\timeOut}{$586$\xspace}
\newcommand{\optimallySolved}{$3462$\xspace}
\DeclareMathOperator{\rot}{rot}
\newcommand{\eref}{\ensuremath{{e^\star}}}
\newcommand*\patchAmsMathEnvironmentForLineno[1]{%
  \expandafter\let\csname old#1\expandafter\endcsname\csname #1\endcsname
  \expandafter\let\csname oldend#1\expandafter\endcsname\csname end#1\endcsname
  \renewenvironment{#1}%
     {\linenomath\csname old#1\endcsname}%
     {\csname oldend#1\endcsname\endlinenomath}}%
\newcommand*\patchBothAmsMathEnvironmentsForLineno[1]{%
  \patchAmsMathEnvironmentForLineno{#1}%
  \patchAmsMathEnvironmentForLineno{#1*}}%
\renewcommand{\orcidID}[1]{\href{https://orcid.org/#1}{\includegraphics[scale=.03]{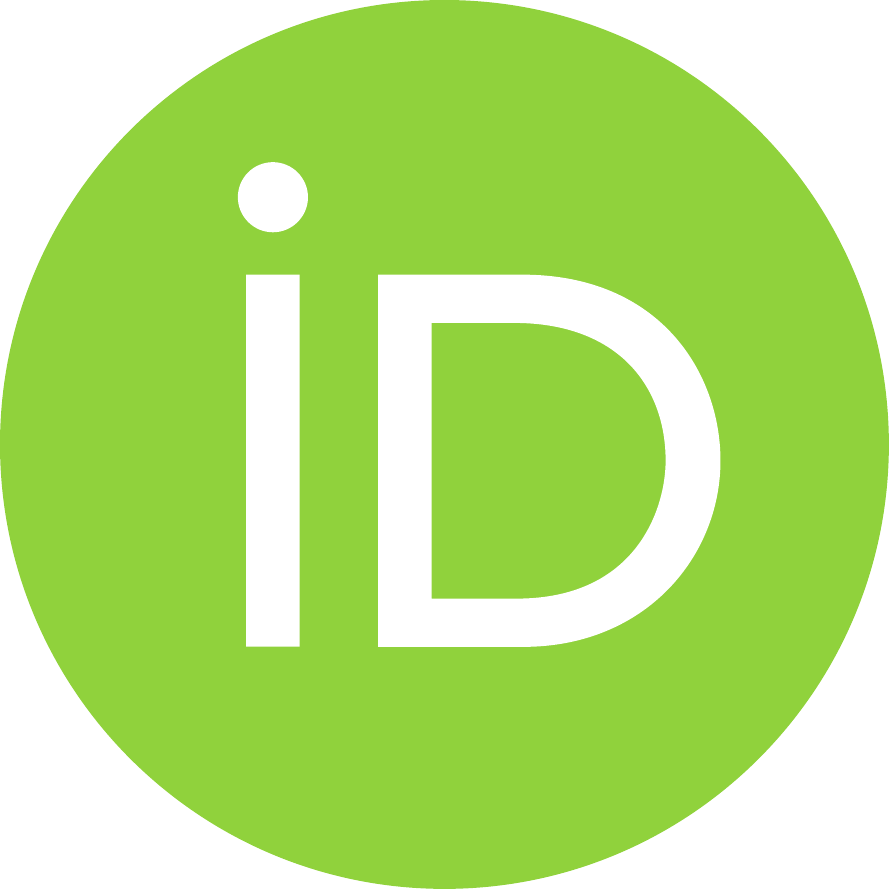}}}
\begin{document}
\title{An Integer-Linear Program for Bend-Minimization in Ortho-Radial
  Drawings}

\titlerunning{ILP for Ortho-Radial Drawings}

\author{Benjamin Niedermann\inst{1}\orcidID{0000-0001-6638-7250} \and
  Ignaz Rutter\inst{2}\orcidID{0000-0002-3794-4406}}
\authorrunning{B. Niedermann \and I. Rutter}

\institute{Universit{\"a}t Bonn, 53115 Bonn, Germany, \email{niedermann@uni-bonn.de} 
  \and Universit{\"a}t Passau, 94032 Passau, Germany,
  \email{rutter@fim.uni-passau.de}}
\maketitle              %
\begin{abstract}
  An ortho-radial grid is described by concentric circles and straight-line spokes emanating from the circles' center.  An
  ortho-radial drawing is the analog of an orthogonal drawing on an
  ortho-radial grid.  Such a drawing has an unbounded outer face and a
  central face that contains the origin.  Building on the 
  notion of an ortho-radial representation~\cite{bnrw-ttsmford-17}, we describe an
  integer-linear program (ILP) for computing bend-free ortho-radial
  representations with a given embedding and fixed outer and central
  face.  Using the ILP as a building block, we introduce a pruning
  technique to compute bend-optimal ortho-radial drawings with a given
  embedding and a fixed outer face, but freely choosable central face.
  Our experiments show that, in comparison with orthogonal drawings
  using the same embedding and the same outer face, the use of
  ortho-radial drawings reduces the number of bends
  by~$\reductionAvg\%$ on average. Further, our approach allows us to
  compute ortho-radial drawings of embedded graphs such as the metro
  system of Beijing or London within seconds.

  \keywords{Ortho-Radial Drawing \and Integer-Linear Program.}
\end{abstract}

\begin{figure}[t]
  \centering
   \begin{picture}(120,100)
     \put(0,0){\includegraphics[page=1]{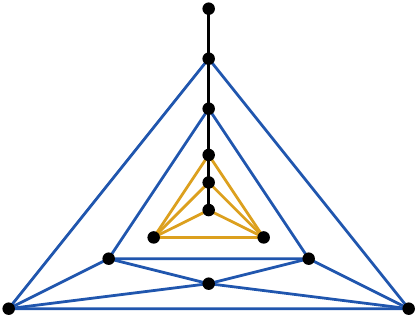}}
     \put(0,90){a)}
   \end{picture}\hfil
   \begin{picture}(100,100)
     \put(0,0){\includegraphics[page=3]{fig/example-biedl-kant}}
     \put(-10,90){b)}
   \end{picture}\hfil
   \begin{picture}(100,100)
     \put(0,0){\includegraphics[page=2]{fig/example-biedl-kant}}
     \put(0,90){c)}
   \end{picture}
   \caption{Graph with (a)~straight-line, (b)~orthogonal and
     (c)~ortho-radial layout. The ortho-radial layout has been created
     by our approach and has 14 bends. The orthogonal layout has been
     proposed by Biedl and Kant~\cite{BiedlK94} and has 23 bends.}
  \label{fig:example:biedl-kant}
\end{figure}

\section{Introduction}
\label{sec:introduction}

Planar orthogonal drawings are arguably one of the most popular
drawing styles.  Their aesthetic appeal derives from their good
angular resolution and the restriction to only the horizontal and the
vertical slope, which makes it easy to trace the edges.  They
naturally correspond to embeddings into the standard grid, where edges
are mapped to paths between their endpoints.  The most important
aesthetic criterion for orthogonal drawings is the number of bends.
Consequently, a large body of literature deals with optimizing the
number of
bends~\cite{t-emn-87,Blasius2016,Blasius2016b,Felsner2014,dg-popda-13,Biedl1998}.
\emph{Ortho-radial drawings} are a natural analog of orthogonal
drawings but on an \emph{ortho-radial grid}, which is formed by
concentric circles and straight-line spokes emanating from the
circles' center.  Besides their aesthetic appeal and the fact that
they inherit favorable properties of orthogonal drawings like a good
angular resolution, they have the potential to save on the
number of bends; see Fig.~\ref{fig:example:biedl-kant}.

The corner-stone of the whole theory of bend minimization is the
notion of an \emph{orthogonal representation}, which for a \emph{plane graph}
(i.e., a graph with a fixed embedding) describes for each vertex the
angles between consecutive incident edges and for each edge the order
and directions of its bends.  It is a seminal result of
Tamassia~\cite{t-emn-87} that characterizes the orthogonal
representations in terms of local conditions and shows that every
orthogonal representation admits a drawing.  The usefulness of this
result hinges on the fact that it turns the seemingly geometric
problem of computing a bend-optimal drawing into a purely
combinatorial one.  Geometric aspects of the drawing, such as choosing
edge lengths, can then be dealt with separately, and after deciding
the bends on the edges.

Today, this is usually described as a pipeline consisting of three
steps, the topology-shape-metrics framework (TSM for short).  The
topology step chooses a planar embedding of the input graph.  The
shape step computes an orthogonal representation for this embedding
(e.g., using flow-based methods), and the metrics step computes edge
lengths so that a crossing-free drawing is obtained.

Recently, this framework has been adapted to ortho-radial drawings.
There is a natural analog of ortho-radial representation that
satisfies analogous local conditions to orthogonal representations.
However, unlike the orthogonal case, there exist ortho-radial
representations that satisfy all the local conditions, but do not
correspond to an ortho-radial drawing; see Fig.~\ref{fig:examples}a,b
for an example.  After initial results on the characterization of
ortho-radial representations of cycles~\cite{hht-orthoradial-09} and
ortho-radial representations of maxdeg-3 graphs, where all faces are
rectangles~\cite{hhmt-rrdcp-10}, Barth et al.~\cite{bnrw-ttsmford-17}
 gave a characterization of the drawable ortho-radial
representations in terms of a third, more global condition.
Niedermann et al.~\cite{nrw-eaorg-19} further showed that, given an
ortho-radial representation that satisfies the third condition, its
ortho-radial drawing can be computed in quadratic time.

Up to now, however, there are no algorithms for computing ortho-radial
representations, even if the graph comes with a fixed planar
embedding, including the central and the outer face.  It is an open
question whether a bend-optimal valid ortho-radial representation can be computed
efficiently in this setting.  The example from
Fig.~\ref{fig:examples}a,b already shows that such an ortho-radial
representation cannot be characterized in terms of purely local
conditions.  Fig.~\ref{fig:examples}c is an example of a
bend-optimal drawing where an edge bends in two different directions.
This shows that a straightforward adaption of existing techniques that
are based on min-cost flows, is unlikely to succeed.  In this paper,
we develop a method for computing ortho-radial representations with
few bends based on an integer-linear program (ILP).  This yields the
first practical algorithm that takes an arbitrary plane maxdeg-4 graph as
input and computes an ortho-radial drawing.  We use it to
evaluate the usefulness of ortho-radial drawings, in
particular with respect to the potential of saving bends in comparison
to orthogonal drawings.

\begin{figure}[tb]
  \centering
  \includegraphics{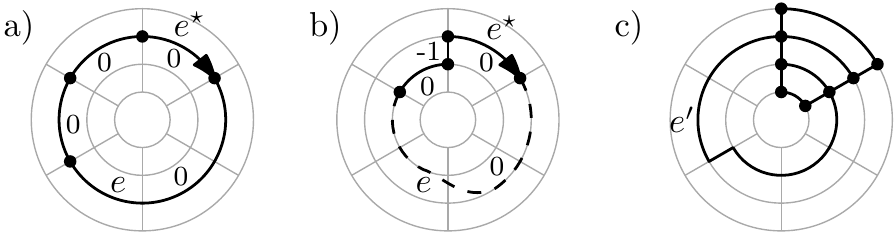}
  \caption{An ortho-radial drawing of a 4-cycle (a) and an
    ortho-radial representation of it that is not drawable (b), though
    the sum of the angles around each vertex and around each face is
    the same as in (a).  A bend-optimal drawing of a graph (c), where
    the edge $e'$ has bends in different directions.}
  \label{fig:examples}
\end{figure}

\paragraph{Contribution and Outline.} We start with preliminary
results in Section~\ref{sec:preliminaries} introducing notions and
facts on orthogonal and ortho-radial representations.  In
Section~\ref{sec:bend-free}, we present an ILP for computing bend-free
ortho-radial representations for graphs with a fixed embedding.  In
Section~\ref{sec:with-bends} we extend that ILP to optimize the number
of bends. To that end, we provide theoretical insights into the number
of bends required for ortho-radial drawings. Moreover, we describe a
pruning strategy that allows us to quickly compute a bend-optimal
drawing.  In Section~\ref{sec:evaluation} we evaluate our algorithms
and compare them to standard approaches for computing orthogonal
drawings.

\section{Preliminaries}
\label{sec:preliminaries}

A graph of maximum degree~4 is a \emph{4-graph}.  Unless stated otherwise,
all graphs occurring in this paper are 4-graphs.  Let $G=(V,E)$ be a
connected planar 4-graph with a fixed combinatorial embedding~$\mathcal E$ and
let~$v \in V$ be a vertex.  We call the counterclockwise order of edges
around~$v$ in the embedding the \emph{rotation of $v$}, and we denote
it by~$\mathcal E(v)$.  An \emph{angle at~$v$} is a pair of
edges~$(e_1,e_2)$ that are both incident to~$v$ and such that~$e_1$
immediately precedes~$e_2$ in~$\mathcal E(v)$.

\begin{figure}[t]
  \centering
  \includegraphics[page=1]{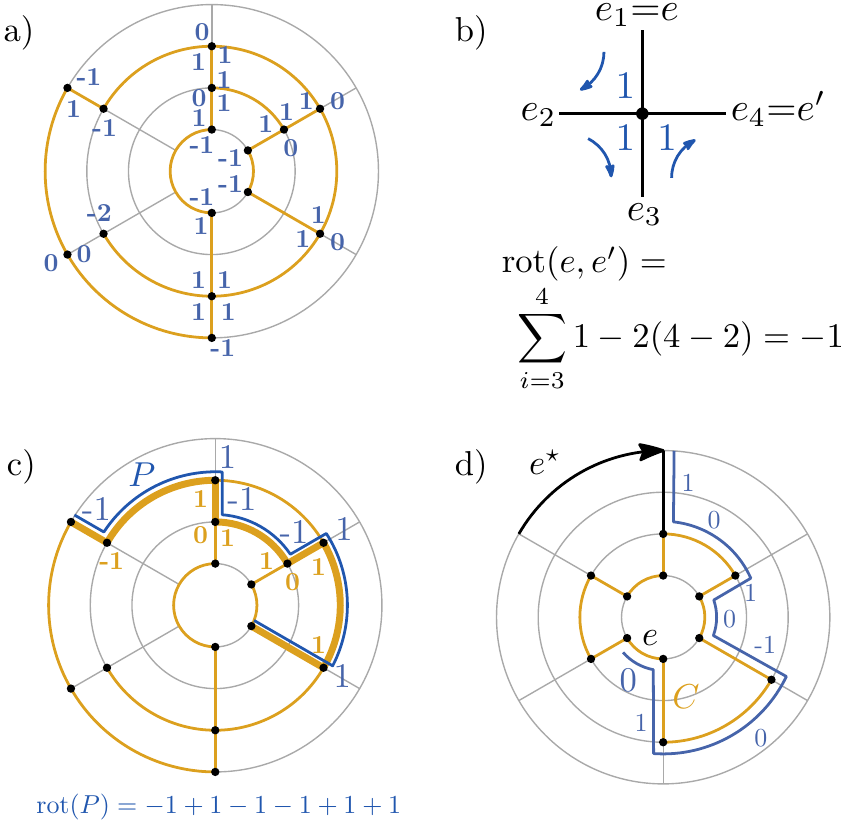}
  \caption{(a) Rotations between angles. (b)~Rotations
    between two edges $e,e'$. (c) Rotation of path $P$. (d) Label of edge $e$ with respect to~essential cycle $C$. }
  \label{fig:prelim}
\end{figure}

Let~$\Delta$ be an orthogonal (or ortho-radial) drawing of $G$ with
embedding~$\mathcal E$.  By turning bends into vertices, we can assume
that the drawing is bend-free.  We now derive a labeling of the angles
of~$v$ with labels in~$\{-2,-1,0,1\}$ with so-called rotation values.
For an angle~$(e_1,e_2)$ at~$v$, we
set~$\rot(e_1,e_2) = 2-2\alpha/\pi$, where $\alpha$ is the
counterclockwise geometric angle between~$e_1$ and~$e_2$ in~$\Delta$; see Fig.~\ref{fig:prelim}a.
Intuitively, this counts the number of right turns one takes when
traversing $e_1$ towards $v$ and afterwards $e_2$ away from $v$, where
negative numbers correspond to left turns.  Note that if~$e_1=e_2$,
then~$\rot(e_1,e_2) = -2$, i.e., $v$ contributes two left turns.

For a face~$f$ of $G$, we denote by~$\rot(f)$ the sum of the rotations
of all angles incident to $f$.  Formally, if~$v_0,\dots,v_{n-1}$ is
the facial walk around~$f$ (oriented such that $f$ lies to the right
of the facial walk), we
define~$\rot(f) = \sum_{i=0}^{n-1} \rot(v_{i-1}v_i,v_iv_{i+1})$ where
indices are taken modulo $n$.  Intuitively, this counts the number of
right turns minus the number of left turns one takes when traversing
the face boundary such that the face $f$ lies to the right.
Since~$\Delta$ is an orthogonal drawing with some outer face~$f_o$, it
satisfies the following conditions~\cite{t-emn-87}.
\begin{enumerate}[(I)]
\item For each vertex, the sum of the rotations around~$v$ is~$2(\deg(v)-2)$.

\item For each face $f \ne f_0$ it is $\rot(f) = 4$ and it
  is~$\rot(f_0) = -4$.
\end{enumerate}

We call an assignment~$\Gamma$ of rotation values to the angles that
satisfy these two rules an \emph{orthogonal representation}.  Every
orthogonal drawing~$\Delta$ induces an orthogonal representation.  An
orthogonal representation~$\Gamma$ is \emph{drawable} if there exits a
drawing~$\Delta$ that induces it. 

For ortho-radial drawings a similar situation occurs.  Here, we have
two special faces; an unbounded face, called the \emph{outer face}
$f_o$, and a \emph{central face} $f_c$, which contains the origin.  If
the central and the outer face are identical, then the drawing does
not enclose the origin, and the ortho-radial drawing does in fact lie
on some patch of the ortho-radial grid that can be conformally mapped
to an orthogonal grid (i.e., without changing any angles).  
An ortho-radial drawing~$\Delta$ similarly defines rotation values
that satisfy the following conditions.
\begin{enumerate}[(I)']
\item For each vertex, the sum of the rotations around~$v$ is~$2(\deg(v)-2)$.\label{condition:or:vertex}
\item For each face $f \ne f_o,f_c$ it is $\rot(f) = 4$, if
  $f_o \ne f_c$, then~$\rot(f_o) = \rot(f_c) = 0$ and~$\rot(f_o) = -4$
  if~$f_o = f_c$.\label{condition:or:face}
\end{enumerate}

Similar to the orthogonal case, an ortho-radial drawing $\Delta$
therefore induces an \emph{ortho-radial representation}~$\Gamma$ that
defines rotation values satisfying these conditions.  An ortho-radial
representation is \emph{drawable} if there exists an ortho-radial
drawing that induces it.

Tamassia~\cite{t-emn-87} proved that every
orthogonal representation is drawable.
In contrast, there exist
ortho-radial representations that are not drawable; see e.g.,
Fig.~\ref{fig:examples}b, which illustrates a so-called strictly monotone cycle.
Its ortho-radial representation satisfies conditions (I)' and (II)',
yet it is not drawable.

To characterize the drawable ortho-radial representations, Barth et
al.~\cite{bnrw-ttsmford-17} introduce a labeling concept.  Since the
horizontal and vertical directions on an ortho-radial grid are not
interchangeable (one is circular, the other is not), additional
information is required. The information which is the horizontal
direction is given by a \emph{reference edge} $\eref$ which is assumed
to lie on the outer face and that is directed such that it points in the
clockwise direction.  To present the characterization of Barth et
al.~\cite{bnrw-ttsmford-17}, we extend the notion of rotation.  For
two edges~$e,e'$ incident to a vertex $v$, let~$e=e_1,\dots,e_k=e'$ be
the edges between them in~$\mathcal E(v)$ so that~$(e_i,e_{i+1})$ is
an angle for $i=1,\dots,k-1$.  To measure the rotation between~$e$
and~$e'$, we convert the rotation values between them into geometric angles, sum
them up, and convert them back to a rotation, which gives
$\rot(e,e') = \sum_{i=1}^{k-1} \rot(e_i,e_{i+1}) - 2(k-2)$; see
Fig.~\ref{fig:prelim}b.  Note that for an angle $(e,e')$, it is
$k=2$, and therefore the two definitions of~$\rot(e,e')$ coincide.  For
a path~$P=v_0,\dots,v_{n-1}$ in $G$, we define its rotation
$\rot(P) = \sum_{i=1}^{n-1} \rot(v_{i-1} v_{i},v_iv_{i+1})$, and for a
cycle $C$ in $G$, we define its rotation
$\rot(C) =\sum_{i=1}^{n} \rot(v_{i-1} v_{i},v_iv_{i+1})$, where
indices are taken modulo~$n$; see Fig.~\ref{fig:prelim}c.  A cycle
$C$ of $G$ is called \emph{essential} if it separates the central and
the outer face.  A cycle is essential if and only if
$\rot(C) = 0$~\cite{bnrw-ttsmford-17}.  We assume that $C$ is directed
such that the central face lies to its right.  Let $e$ be an edge
on~$C$.  A \emph{reference path} for $e$ on $C$ is a (not necessarily
simple) walk~$P$ that starts with the edge $\eref$, ends with the edge
$e$ and does not contain an edge or a vertex that lies to the right of
$C$.  We define~$\ell_C(e) = \rot(P)$ as the label of $e$ on $C$; see
Fig.~\ref{fig:prelim}d.  Barth et al.~\cite{bnrw-ttsmford-17} show
that the label does not depend on the actual path~$P$ (however the
same edge may have different labels for different cycles).  With this,
Barth et al. formulate a third condition.  An ortho-radial
representation is called \emph{valid} if, for each essential cycle
$C$, either~$\ell_C(e) = 0$ for all edges $e \in E(C)$, or there exist
edges $e^-,e^+$ in $E(C)$ with~$\ell_C(e^-) < 0$ and~$\ell_C(e^+) > 0$.  A
cycle $C$ that does not satisfy this condition is called
\emph{strictly monotone}. Thus, an ortho-radial representation is
valid if and only if it has no strictly monotone cycle.  In
Fig.~\ref{fig:examples}a,b the edges of the 4-cycle are labeled with
their labels with respect to the reference edge $\eref$;
Fig.~\ref{fig:examples}b is a strictly monotone cycle.  The following
two results form the combinatorial and algorithmic basis for our work.

\begin{theorem}[Barth et al.~\cite{bnrw-ttsmford-17}]
  \label{thm:basic-combinatorial}
  An ortho-radial representation is drawable if and only if it is
  valid.
\end{theorem}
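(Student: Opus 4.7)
The plan is to prove the two implications separately. The forward direction (drawable $\Rightarrow$ valid) should admit a short geometric argument, while the backward direction (valid $\Rightarrow$ drawable) is the main content and I would approach it by an inductive construction.

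For the forward direction, suppose $\Gamma$ is induced by a drawing $\Delta$ and consider any essential cycle $C$. I would work on the universal cover of the annulus separating the central from the outer face, equivalently the infinite strip obtained by unrolling the ortho-radial grid, with the angular coordinate as $x$ and the radial coordinate as $y$; the reference edge $\eref$ fixes the $+x$ direction. Oriented as in the statement, $C$ lifts to a path from some point $(x_0, y_0)$ to $(x_0 + 2\pi, y_0)$, so in particular its net vertical displacement is zero. The label $\ell_C(e)$ determines the grid direction of each edge of $C$ in the lift, and hence the signs of its $x$- and $y$-increments. If all labels of $C$ are non-negative with at least one strictly positive, then no edge of the lift moves in the $+y$ direction. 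A short case analysis, split by whether any label equals $1$, yields either a strictly negative net $y$-displacement (contradicting closure) or a lift that lies on a single horizontal line while wrapping once around the origin (contradicting planarity of $\Delta$). The case of all non-positive labels is symmetric.

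For the backward direction I would proceed by induction on a complexity measure of $(G,\Gamma)$ such as the number of essential cycles. The base case is $f_o = f_c$, where $\Gamma$ has no essential cycles and reduces to a classical orthogonal representation, drawable by Tamassia's theorem~\cite{t-emn-87}. For the inductive step, pick an essential cycle $C$; by validity its labels are either identically zero or include both signs. I would cut $G$ along $C$ into an inner subgraph $G_c$ (containing the central face) and an outer subgraph $G_o$ (containing the outer face), equip each piece with the representation inherited from $\Gamma$, and apply induction to both sides, which have strictly fewer essential cycles. A drawing of $\Gamma$ is then obtained by gluing the two partial drawings along a common realization of $C$.

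The main obstacle is the gluing step. The drawings of $G_c$ and $G_o$ produced by the induction are a priori independent and need not agree on the coordinates of their shared vertices along $C$: the radial positions along $C$ are each constrained by edge-length consistency on the inner and outer sides, and must further be placed on a common family of concentric circles. The non-existence of a strictly monotone subcycle on $C$ is the global condition that guarantees a compatible choice of radii, presumably via a flow- or potential-theoretic feasibility argument on $C$. I therefore expect the induction hypothesis to be strengthened so that each inductive call produces not merely some drawing but a family of drawings parameterized by admissible placements of the cutting cycle, with the technical core of the theorem being the proof that validity is exactly the invariant preserved by the cut-and-glue procedure.
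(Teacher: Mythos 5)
First, note that the paper does not prove this statement at all: it is imported verbatim from Barth et al.~\cite{bnrw-ttsmford-17} as Theorem~\ref{thm:basic-combinatorial} and used as a black box, so there is no in-paper proof to compare yours against. Judged on its own, your forward direction (drawable $\Rightarrow$ valid) is the right idea in outline---lifting an essential cycle to the universal cover of the annulus and arguing that a cycle whose labels are all non-negative (one strictly positive) cannot close up without either losing net radial displacement or degenerating to a self-overlapping circle---and this matches the spirit of the necessity argument in the cited work, though the case analysis you defer (labels determine directions only modulo the correspondence label $\bmod\ 4 \mapsto$ grid direction) is where the actual work sits.

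The backward direction, however, has a genuine gap, and not only the one you flag. Your induction measure does not decrease: after cutting along an essential cycle $C$, the cycle $C$ itself survives as an essential cycle in \emph{both} pieces (it bounds the outer face of the inner piece and encloses the central face of the outer piece), so a graph with a single essential cycle produces two subinstances each still containing one, and the recursion need not terminate. Worse, the gluing step you identify as ``the main obstacle'' is in fact the entire content of the theorem, and deferring it to a ``presumably flow- or potential-theoretic feasibility argument'' leaves the proof unfinished precisely where validity must enter. The proof in \cite{bnrw-ttsmford-17} takes a different route: it first shows that any valid ortho-radial representation can be augmented so that every regular face is a rectangle while preserving validity, and then, for rectangular representations, assigns consistent edge lengths via a flow network, proving that this flow is feasible exactly when no strictly monotone cycle exists. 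That rectangulation-plus-flow argument replaces both your induction and your gluing step, and is what makes the global validity condition bite; your sketch correctly locates the difficulty but does not overcome it.
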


\begin{theorem}[Niedermann et al.~\cite{nrw-eaorg-19}]
  \label{thm:basic-algorithmic}
  There is an $O(n^2)$-time algorithm that, given an ortho-radial
  representation~$\Gamma$ of an $n$-vertex graph $G$, either outputs a
  drawing of~$\Gamma$, or a strictly monotone cycle $C$ in~$\Gamma$.
\end{theorem}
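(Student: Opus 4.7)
The plan is to attempt to construct an ortho-radial drawing of $\Gamma$ directly and, whenever the construction fails, to extract a strictly monotone cycle as the certificate of non-drawability. By Theorem~\ref{thm:basic-combinatorial} exactly one of these two outcomes is possible, so the algorithmic task reduces to making the dichotomy effective and to bounding the running time by $O(n^2)$.

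First I would orient every edge as either \emph{radial} or \emph{circular} by propagating an ortho-radial direction outward from the reference edge $\eref$: once one incident edge at a vertex~$v$ has a fixed direction, the rotation values prescribed by $\Gamma$ determine the directions of all remaining edges at~$v$, and condition~(II)' guarantees that the propagation is consistent around every face. With orientations in place, I then try to assign every vertex a radius $r(v)\in\mathbb{Z}_{\geq 1}$ and an angular coordinate $\theta(v)\in\mathbb{Z}/k\mathbb{Z}$, where each radial edge is required to be a signed unit step in $r$ and each circular edge a signed unit step in $\theta$, with the signs read off from the orientations. The radial subsystem is non-cyclic (there is no radial wrap-around, since $f_c$ is radially innermost), so $r$ can be computed in $O(n)$ time by a BFS from any root.

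The angular subsystem is the source of all difficulty, because $\theta$ lives on a cyclic group: consistency around every essential cycle must match the global circumference~$k$, and inconsistency manifests exactly as a strictly monotone cycle. I would tackle it with a Bellman--Ford-style relaxation lifted to $\mathbb{Z}$: initialize $\theta$ at one reference vertex, and for $O(n)$ rounds push forward the unit constraints along circular edges. If the relaxation converges, the resulting values project to a consistent assignment on some cyclic group, and a drawing is obtained by placing each vertex at $(r(v),\theta(v))$ and routing edges along the implied radial or circular segments. If relaxation fails to converge within $O(n)$ rounds, the predecessor pointers at a non-stabilizing vertex trace out a cycle whose angular increments sum monotonically in a single direction.

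The main obstacle, and the step that requires real work, is showing that the extracted cycle is not merely a walk with a monotone angular profile but is in fact an \emph{essential} cycle in the sense of Barth et al.~\cite{bnrw-ttsmford-17}: that $\rot(C)=0$, that $C$ separates $f_o$ from $f_c$, and that $\ell_C(e)\geq 0$ for all $e\in E(C)$ (or $\leq 0$ for all) with respect to~$\eref$. Proving this requires combining the Bellman--Ford witness with condition~(II)' applied to $f_c$, in order to rule out shortcuts through non-essential cycles, and to realize the reference path used in the definition of $\ell_C(e)$ inside the partial structure already computed. Each relaxation round touches $O(n)$ edges and $O(n)$ rounds suffice, which yields the claimed $O(n^2)$ bound.
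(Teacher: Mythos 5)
This theorem is not proved in the paper at all: it is imported verbatim from Niedermann et al.~\cite{nrw-eaorg-19}, so there is no in-paper argument to compare against. Judged on its own merits, your sketch has two genuine gaps. First, the coordinate-assignment step is wrong as stated: you require every radial edge to be a unit step in $r$ and every circular edge a unit step in $\theta$, but edge lengths in an ortho-radial (as in an orthogonal) drawing cannot in general all be equal --- opposite sides of a face must have matching total lengths even when they consist of different numbers of edges, so the unit-step system is infeasible even for perfectly drawable representations. The metrics step genuinely requires computing edge lengths, which in the cited work is done by first augmenting $\Gamma$ to a rectangular representation and then solving flow problems; a BFS from a root does not replace this.

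Second, and more fundamentally, the step you yourself flag as ``the main obstacle'' is in fact the entire content of the theorem. A cycle traced out by non-stabilizing predecessor pointers in a Bellman--Ford-style relaxation is a priori just a closed walk with a one-sided angular profile; to serve as the certificate demanded by the theorem it must be an \emph{essential} cycle (separating $f_o$ from $f_c$, equivalently $\rot(C)=0$) whose labels $\ell_C(e)$ --- defined via reference paths from $\eref$ that avoid the right side of $C$ --- are all of one sign and not all zero. Bridging from the relaxation witness to that global, embedding-dependent labeling condition is precisely the hard combinatorial work of~\cite{nrw-eaorg-19}, and your proposal defers it rather than supplying it. Without that bridge the algorithm could fail to converge and yet emit a cycle that is not strictly monotone, in which case neither promised output is produced. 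So the proposal is a reasonable high-level plan (``attempt the drawing, extract a monotone cycle on failure'' is indeed the spirit of the cited algorithm), but it does not constitute a proof.
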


\section{ILP for Bend-Free Ortho-Radial Drawings}
\label{sec:bend-free}

\newcommand{\reverse}[1]{\overline{#1}}
\newcommand{\pred}{\text{pred}}

In this section we are given a planar 4-graph~$G=(V,E)$ with a
combinatorial embedding~$\mathcal E$, an outer face $f_o$, a central
face $f_c$ and a reference edge $\eref$ on $f_o$; we denote that
instance by $\mathcal G=(G,\mathcal E, f_o, f_c, \eref)$. We present
an algorithm based on an ILP that yields a valid ortho-radial
representation of $\mathcal G$, if it exists. %

\paragraph{Basic Formulation.}
For each vertex $u$ and each of its angles $(e,e')$ we introduce an
integer variable $r_{e,e'}\in \{-2,-1,0,1\}$, which describes the
rotation~$\rot(e,e')$ between $e$ and $e'$. Condition~\ref{condition:or:vertex}' is enforced by the following constraint for $u$.
\begin{eqnarray}
  \sum^k_{i=1} r_{e_{i},e_{i+1}} = 2(\deg(v)-2),\label{constr:or:vertex}
\end{eqnarray}
where $e_1,\dots,e_k$ are the incident edges of $u$ in
counter-clockwise order; we define $e_{k+1}=e_1$.  For each face $f$
of $\mathcal G$ Condition~\ref{condition:or:face}' is enforced by the
next constraint.
\begin{eqnarray}
  \sum_{i=1}^kr_{e_i,e_{i+1}} = \begin{cases}
    4  & \text{if $f$ is a regular face,}\\
    0  & \text{if $f$ is the outer or central face but not both,}\\
    -4 & \text{if $f$ is both the central and outer face,}
  \end{cases}\label{constr:or:face}
\end{eqnarray}
where $e_1,\dots,e_k$ are the edges of the facial walk around $f$ 
such that $f$ lies to the right; we define $e_{k+1}=e_1$.
We denote that formulation by $\mathcal F_{\mathrm{or}}$.  By
construction a valid assignment of the variables in
$\mathcal F_{\mathrm{or}}$ \emph{induces} an ortho-radial
representation~$\Gamma$. In particular, assuming that $e^\star$ is directed such that it points clockwise, we can derive from the variable assignment the
directions of the other edges in $G$. The next theorem summarizes this result.

\begin{theorem}\label{thm:ilp:or}
  An ortho-radial representation exists for $\mathcal G$ if and only if
  $\mathcal F_{\mathrm{or}}$ induces an ortho-radial representation.
\end{theorem}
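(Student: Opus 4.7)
My plan is to establish the equivalence by exhibiting a direct correspondence between ortho-radial representations of $\mathcal G$ and feasible integer assignments to the variables of $\mathcal F_{\mathrm{or}}$. Both are, at bottom, the same object: a labeling of angles by values in $\{-2,-1,0,1\}$. The content of the theorem is then simply that the ILP constraints are literal transcriptions of conditions (I)' and (II)'.

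For the forward direction, I would assume that $\mathcal G$ admits an ortho-radial representation $\Gamma$. By definition, $\Gamma$ assigns every angle $(e,e')$ a rotation value in $\{-2,-1,0,1\}$, with the sum around every vertex $v$ equal to $2(\deg(v)-2)$ and the sum around every face $f$ equal to the value prescribed by condition (II)' (namely $4$, $0$, or $-4$ depending on whether $f$ is regular, exactly one of $f_o,f_c$, or both). Setting $r_{e,e'} := \Gamma(e,e')$ is thus a feasible solution of $\mathcal F_{\mathrm{or}}$, and by construction it induces $\Gamma$.

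For the converse, I would assume a feasible solution of $\mathcal F_{\mathrm{or}}$ and interpret the variables $r_{e,e'}$ as a rotation labeling of the angles of $G$. Constraint~\eqref{constr:or:vertex} is exactly condition (I)' for every vertex, and the case distinction in Constraint~\eqref{constr:or:face} matches the three cases of condition (II)' one-to-one, including the degenerate case in which $f_o = f_c$. Hence the labeling is an ortho-radial representation. The edge directions are recovered from the clockwise orientation of $\eref$ by propagating along a spanning walk and applying the rotation values at each traversed angle; the vertex and face equalities guarantee that the propagation is consistent (independent of the walk chosen), so the directions of all edges are well-defined.

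The main obstacle is really only the case-analysis bookkeeping on the face constraint, together with being careful that in Constraint~\eqref{constr:or:vertex} the wrap-around angle from $e_k$ to $e_1$ is included (so the sum is over all $\deg(v)$ angles around $v$). No geometric or combinatorial argument beyond these straightforward verifications is required, because the statement is definitional: the ILP is constructed to have exactly the feasible region prescribed by conditions (I)' and (II)'.
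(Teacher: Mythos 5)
Your proposal is correct and matches the paper's treatment: the paper gives no separate proof, asserting that the theorem holds ``by construction'' since Constraints~(\ref{constr:or:vertex}) and~(\ref{constr:or:face}) are literal transcriptions of conditions (I)' and (II)', which is exactly the bijective correspondence you spell out. Your additional remark about recovering edge directions by propagation from $\eref$ likewise mirrors the paper's brief comment on deriving directions from the variable assignment.
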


\begin{figure}[t]
  \centering
  \includegraphics[page=1]{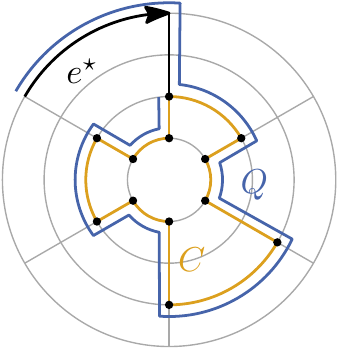}
  \caption{Illustration of path Q used for the labeling of $C$.}
  \label{fig:ilp:path-Q}
\end{figure}

However, the induced ortho-radial representation $\Gamma$ is not
necessarily valid, but may contain strictly monotone cycles. We
therefore extend $\mathcal F_{\mathrm{or}}$ by constraints for each
essential cycle $C$ of $\mathcal G$.  To that end, let $P$ be a path
that starts at $\eref$ and ends at $C$ such that it does not use any
vertex or edge that lies to the right of~$C$. Further, let $Q$ be the
path $\eref + P + C$ that follows $C$ in clockwise order from the
endpoint of $P$ and ends at the end point of $P$; see
Fig.~\ref{fig:ilp:path-Q}. For each edge $e$ of $Q$ we introduce an
integer variable $l_e$ with $-m \leq l_e \leq m$, which models a label
with respect to $C$. Here $m$ denotes the number of edges of $G$. We
require that the label of the reference edge is $0$, i.e.,
 $l_\eref = 0$. %
Moreover, for an edge $e=vw$ of $Q\setminus\{e^\star\}$ and its
predecessor $e'=uv$ on $Q$ let~$e'=e_1,\dots,e_k=e$ be the edges
between them in~$\mathcal E(v)$ so that~$(e_i,e_{i+1})$ is an angle
for $i=1,\dots,k-1$. We introduce the constraint
\begin{eqnarray}
  l_{e} = l_{e'}+ \sum_{i=1}^{k-1} r_{e_i,e_{i+1}} - 2(k-2)
\end{eqnarray}
Hence, the values $l_e$ for $e \in E(C)$ describe a labeling of
$C$, where $E(C)$ denotes the edges of $C$.  To exclude strictly
monotone cycles, we ensure that either~$l_e = 0$ for all edges
$e \in E(C)$, or there exist edges $e^-,e^+$ in $E(C)$
with~$l_{e^-} > 0$ and~$l_{e^+} < 0$. We first observe that $C$ can
only be strictly monotone if $\sum_{e \in E(C)} l_e\neq 0$. We
introduce a single binary variable $z$ that is $1$ if and only if
$\sum_{e\in E(C)} l_e = 0$. Additionally, for each edge $e$ of $C$ we
introduce two binary variables $x_e$ and $y_e$, which are used to enforce
that $l_e$ is negative or positive, respectively.

\begin{minipage}{0.45\textwidth}
  \begin{eqnarray}
  \sum_{e \in E(C)} l_e \leq  M \cdot(1 - z)\label{constr:mc:C} 
  \end{eqnarray}
\end{minipage}
\begin{minipage}{0.45\linewidth}
      \begin{eqnarray}
   \sum_{e \in E(C)} l_e \geq - M \cdot(1 - z)\label{constr:mc:D}
  \end{eqnarray}  
\end{minipage}

\begin{minipage}{0.45\textwidth}
  \begin{eqnarray}
  \sum_{e \in E(C)}x_e + z \geq 1\label{constr:mc:E}
  \end{eqnarray}
\end{minipage}
\begin{minipage}{0.45\textwidth}
  \begin{eqnarray}
   \sum_{e \in E(C)}y_e + z \geq 1\label{constr:mc:F}
  \end{eqnarray}
\end{minipage}

\hspace{-6ex}
\begin{minipage}{0.5\textwidth}
  \begin{eqnarray}
   l_e \leq -1 + M\cdot (1-x_e)  \ \forall e\in E(C)\label{constr:mc:A}
  \end{eqnarray}
\end{minipage}
\begin{minipage}{0.5\textwidth}
  \begin{eqnarray}
  l_e \geq 1 - M\cdot (1-y_e)  \ \forall e\in E(C)\label{constr:mc:B}
  \end{eqnarray}
\end{minipage}
\vspace{2ex}

We define $M$ as a constant with $M>m$ so that the corresponding
constraints are trivially satisfied for $z=0$, $x_e=0$ and $y_e=0$,
respectively. If $z=1$, we obtain by Constraint~\ref{constr:mc:C} and
Constraint~\ref{constr:mc:D} that $\sum_{e\in E(C)} l_e = 0$. Hence,
$C$ is not strictly monotone. Otherwise, if $z=0$, by
Constraint~\ref{constr:mc:E} there is an edge $e^-\in E(C)$ with
$x_{e^-}=1$. By Constraint~\ref{constr:mc:A} we
obtain~$l_{e^-}<0$. Similarly, by Constraint~\ref{constr:mc:F} there
is an edge $e^+\in E(C)$ with $y_{e^+}=1$. By
Constraint~\ref{constr:mc:B} we obtain~$l_{e^+}>0$. Altogether, we
find that $C$ is not strictly monotone.  We emphasize that for each
essential cycle $C$ of $\mathcal G$ we introduce a fresh set of variables
and constraints; which we denote by $\mathcal F_C$. Hence, we
consider the ILP
$\mathcal F(\mathcal G)=\mathcal F_{\mathrm{or}} \cup \bigcup_{C\in \mathcal
  C} \mathcal F_C$, where $\mathcal C$ is the set of all essential
cycles in $\mathcal G$. The next theorem summarizes this.
\begin{theorem}\label{thm:valid-or}
  If $\mathcal G$ has an ortho-radial representation, then the
  formulation $\mathcal F(\mathcal G)$ induces a valid ortho-radial
  representation.
\end{theorem}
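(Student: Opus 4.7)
The plan is to prove the theorem by establishing two complementary statements: any feasible solution to $\mathcal F(\mathcal G)$ induces a valid ortho-radial representation, and conversely, any valid ortho-radial representation of~$\mathcal G$ yields a feasible solution to $\mathcal F(\mathcal G)$. Combined with Theorem~\ref{thm:ilp:or}, which characterises feasibility of the base ILP $\mathcal F_{\mathrm{or}}$, this shows that the extension by the constraint sets $\mathcal F_C$ for all essential cycles~$C$ exactly discards the non-valid representations, so that whenever a valid representation exists the ILP finds one.

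For the first direction I would fix a feasible assignment of $\mathcal F(\mathcal G)$. Since it is also feasible for $\mathcal F_{\mathrm{or}}$, Theorem~\ref{thm:ilp:or} yields an induced ortho-radial representation~$\Gamma$, and it remains to show that no essential cycle of~$\Gamma$ is strictly monotone. The crucial observation is that for every essential cycle~$C$ and every edge $e \in E(C)$, the variable $l_e$ coincides with the label $\ell_C(e)$ in the sense of Barth et al.~\cite{bnrw-ttsmford-17}. This follows by induction along~$Q$: the constraint $l_{\eref} = 0$ matches the definition of the label on the reference edge, and the linear recurrence for $l_e$ in terms of its predecessor on~$Q$ unrolls exactly to the rotation of the prefix of~$Q$ ending at~$e$. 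Since~$P$ is chosen to avoid the right side of~$C$, this prefix is a valid reference path for~$e$, and by the path-independence of the labeling~\cite{bnrw-ttsmford-17} we conclude $l_e = \ell_C(e)$. A short case split on the binary variable~$z$ then rules out strict monotonicity: if $z=1$, constraints~(\ref{constr:mc:C}) and~(\ref{constr:mc:D}) force $\sum_{e \in E(C)} l_e = 0$, which is incompatible with strict monotonicity as already noted by the authors; if $z=0$, constraints~(\ref{constr:mc:E}) and~(\ref{constr:mc:F}) produce edges with $x_{e^-}=1$ and $y_{e^+}=1$, and then~(\ref{constr:mc:A}) and~(\ref{constr:mc:B}) yield $\ell_C(e^-) \leq -1$ and $\ell_C(e^+) \geq 1$.

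For the converse direction I would start from a valid ortho-radial representation~$\Gamma$ of~$\mathcal G$ and set the $r_{e,e'}$ variables according to the rotations prescribed by~$\Gamma$; by Theorem~\ref{thm:ilp:or} all constraints of $\mathcal F_{\mathrm{or}}$ are then satisfied. For each essential cycle~$C$, I would set $l_e$ along the walk~$Q$ so that $l_{\eref}=0$ and $l_e$ equals the rotation of the prefix of~$Q$ ending at~$e$, which matches $\ell_C(e)$ for edges of~$C$ and satisfies the linear recurrence by construction. Validity of~$\Gamma$ on~$C$ means that either all labels on~$C$ vanish or both a negative and a positive label occur, so exactly one of the configurations $z=1$ with $x_e=y_e=0$, or $z=0$ with $x_{e^-}=y_{e^+}=1$ for chosen witnesses $e^-, e^+$ and all other $x_e, y_e$ set to zero, satisfies the remaining inequalities, using the slack provided by $M>m$.

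The main obstacle I expect is the identification of the ILP variables $l_e$ with the combinatorial labels $\ell_C(e)$. This is where the global validity condition on essential cycles is linked to the purely local linear recurrence implemented by the ILP, and it depends on the non-trivial path-independence property of the labeling established by Barth et al.~\cite{bnrw-ttsmford-17}; once this identification is in place, the remaining reasoning reduces to elementary big-$M$ case analysis on the binary variables $z$, $x_e$, and $y_e$.
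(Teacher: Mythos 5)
Your proposal follows essentially the same route as the paper, whose proof of this theorem consists of the discussion immediately preceding it: the big-$M$ case analysis on $z$, $x_e$, $y_e$ (using that a strictly monotone cycle forces $\sum_{e\in E(C)} l_e \neq 0$) together with the identification of the variables $l_e$ with the labels $\ell_C(e)$, which you make explicit via the recurrence along $Q$ and the path-independence result of Barth et al., and you additionally supply the converse feasibility direction that the paper leaves implicit. One caveat worth noting: your feasibility argument (like the paper's) starts from a \emph{valid} ortho-radial representation, so to obtain the theorem under its literal hypothesis---that $\mathcal G$ merely has \emph{some} ortho-radial representation---one would still need that such an instance also admits a valid one; otherwise $\mathcal F(\mathcal G)$ could in principle be infeasible even though $\mathcal F_{\mathrm{or}}$ is not.
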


\paragraph{Separation of Constraints.}
Adding $\mathcal \mathcal F_C$ for each essential cycle $C$ of
$\mathcal G$ is not feasible in practice, as there can be
exponentially many of these in $\mathcal G$.  Hence, instead, we
propose an algorithm that adds $\mathcal F_C$ on demand. The algorithm
first checks whether $\mathcal G$ has an ortho-radial representation
$\Gamma_1$ using the formulation $\mathcal F_1:=\mathcal F_{or}$
(Theorem~\ref{thm:ilp:or}). If this is not the case, the algorithm
stops and returns that there is no ortho-radial representation for
$\mathcal G$. Otherwise, starting with $\mathcal F_1$ and $\Gamma_1$
it applies the following iterative procedure. In the $i$-th iteration
(with $2\leq i$) it checks whether $\Gamma_{i-1}$ is valid
(Theorem~\ref{thm:basic-algorithmic}). If it is, the algorithm stops
and returns $\Gamma_{i-1}$. Otherwise, the validity test yields a
strictly monotone cycle~$C$ as a certificate proving that $\Gamma_{i-1}$
is not valid. The algorithm creates then the formulation
$\mathcal F_{i}=\mathcal F_{i-1}\cup \mathcal F_C$ and induces the
ortho-radial representation~$\Gamma_i$, in which it is enforced that
$C$ is not strictly monotone. The algorithm stops at the latest when
the formulation $\mathcal F_C$, which prohibits that $C$ is a strictly
monotone cycle, has been added for each essential cycle
$C\in \mathcal C$. Hence, in theory an exponential number of
iterations may be necessary. However, in our experiments the procedure
stopped after few iterations for all test instances; see
Section~\ref{sec:evaluation}.

\paragraph{Bend Optimization.}
We also can use the ILP to optimize the ortho-radial
representation. In Section~\ref{sec:with-bends} we consider bend
minimization by modeling bends as degree-2 vertices. We therefore
extend $\mathcal F_{\mathrm{or}}$ such that it allows us to optimize
the change of direction at such nodes. For each degree-2 vertex we
introduce a binary variable $c_u$, which is $1$ if and only if one of
the two incident edges of $u$ lies on a concentric circle and the
other lies on a ray of the grid. The two incident edges $e_1$ and
$e_2$ of $u$ form the two angles $(e_1,e_2)$ and $(e_2,e_1)$. For
these we introduce the constraints $c_u \geq r_{e_1,e_2}$ and
$c_u \geq r_{e_2,e_1}$. Subject to these constraints we minimize
$\sum_{u \in V_2}c_u$, where $V_2\subseteq V$ denotes the degree-2
vertices of $G$. We can easily restrict the optimization to a subset
of $V_2$ distinguishing between degree-2 vertices
that originally belong to $G$ and those that we use for modeling
bends.

\section{Optimizing Bends and the Choice of the Central Face}
\label{sec:with-bends}

\begin{figure}[t]
  \centering
  \includegraphics[page=2]{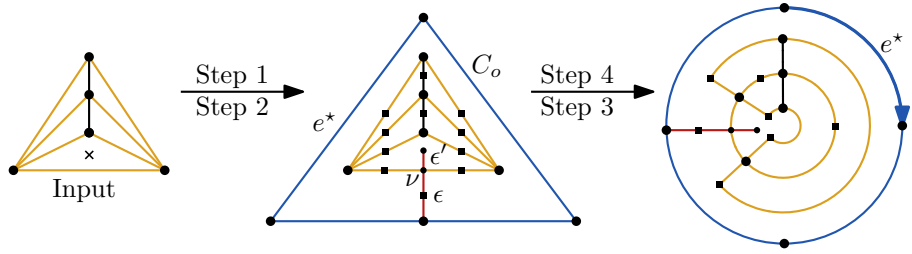}
  \caption{Illustration of the layout algorithm. Subdivision vertices are squares.}
  \label{fig:algorithm:four-steps}
\end{figure}

In this section we are given a graph $G$ with embedding $\mathcal E$
and designated outer face $f_o$. We describe an algorithm that returns
a bend-optimal ortho-radial drawing for
$\mathcal G=(G,\mathcal E, f_o)$, i.e., there is no other ortho-radial
drawing $\mathcal G$ that has fewer bends for any choice of the
central face $f_c$ and the reference edge $e^\star$ on $f_o$.  The
algorithm uses the ILP from Section~\ref{sec:bend-free} as a building
block.  The ILP does not directly allow to express bends; rather, we
subdivide the edges with degree-2 vertices, which can then be used as
bends. See Fig.~\ref{fig:algorithm:four-steps} for an illustration.

\begin{enumerate}
\item Insert a cycle $C_o$ in $\mathcal E$ that encloses $G$ and
  connect $C_o$ via an edge~$\epsilon$ to a newly inserted vertex~$\nu$ on
  the original outer face of $\mathcal E$. Insert edge $\epsilon'$ on
  the opposite side enforcing that $\nu$ has degree 4. Hence, $C_o$ is the
  new boundary of the outer face $f_o$. Choose an arbitrary edge of
  $C_o$ as reference edge~$e^\star$.
\item Subdivide each edge of $G$ with degree-2 vertices such that each
  maximally long chain of degree-2 vertices consists of at least $K$ vertices.
\item Create a valid ortho-radial representation~$\Gamma_f$ for face
  $f$ as the central face. To that end, apply the ILP formulation of
  Section~\ref{sec:bend-free} with separated constraints and bend
  optimization on $\mathcal G_f=(G,\mathcal E, f_o, f,e^\star)$
  charging the newly inserted degree-2 vertices with costs; the
  subdivision vertices on $\epsilon$ are not charged.
\item For the representation $\Gamma_f$ with fewest bends compute a
  drawing (Theorem~\ref{thm:basic-algorithmic}).
\end{enumerate}
  \begin{figure}[tb]
    \centering
    \includegraphics[page=1]{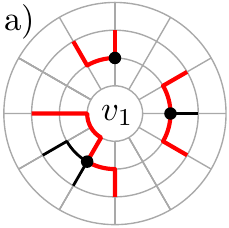}\hfil
    \includegraphics[page=2]{fig/bk-steps}\hfil
    \includegraphics[page=3]{fig/bk-steps}\hfil
    \includegraphics[page=4]{fig/bk-steps}
    \caption{Constructions for the proof of Theorem~\ref{thm:bounds}}
    \label{fig:bk-steps:template}
  \end{figure}
For bend-optimal drawings an appropriately large $K$ is decisive.  For
biconnected graphs $K=2n+4$ is sufficient even for a fixed central and outer face. 
\begin{theorem}\label{thm:bounds}
  Every biconnected plane 4-graph on $n$ vertices with designated
  central and outer faces has a planar ortho-radial drawing with at
  most $2n+4$ bends and at most two bends per edge with the exception
  of up to two edges that may have three bends.
\end{theorem}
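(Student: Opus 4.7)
The plan is to adapt the canonical-ordering-based orthogonal drawing algorithm of Biedl and Kant~\cite{BiedlK94}, which is known to produce an orthogonal drawing of any biconnected plane 4-graph on $n$ vertices with at most $2n+4$ bends and at most two bends per edge. First observe that if the designated outer and central faces coincide, then by the conformal mapping observation from Section~\ref{sec:preliminaries} an ortho-radial drawing behaves like an orthogonal one, so Biedl-Kant yields the result directly. The remaining case, and the interesting one, is $f_o \neq f_c$.

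For the case $f_o \neq f_c$, I would first \emph{cut} the embedding along a simple path~$P$ that starts at a vertex on $\partial f_c$, ends at a vertex on $\partial f_o$, and passes through the interior of $G$ (such a path exists since $G$ is biconnected and $f_o \neq f_c$). Duplicating the edges of $P$ yields a biconnected plane 4-graph $G'$ whose outer face boundary is the concatenation of $\partial f_c$, one copy of $P$, $\partial f_o$, and the reverse copy of $P$. Applying Biedl-Kant to $G'$ produces an orthogonal drawing in a bounding rectangle in which $\partial f_c$ forms the top side, $\partial f_o$ the bottom side, and the two copies of $P$ the left and right sides. This drawing has at most $2n+4$ bends with at most two bends per edge.

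Next, I would wrap the rectangle around the origin and re-identify the two copies of $P$, turning the orthogonal drawing into an ortho-radial one with $f_c$ as the central face and $f_o$ as the outer face. The wrap is topologically legal because the left and right sides of the rectangle carry consistent radial slopes. The drawability of the resulting ortho-radial representation then follows from Theorem~\ref{thm:basic-combinatorial}, provided that no essential cycle is strictly monotone; this reduces to verifying, using the labels introduced in Section~\ref{sec:preliminaries}, that the Biedl-Kant output does not force a monotone rotation pattern along any cycle separating $f_c$ from $f_o$.

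The main obstacle is the identification step. Biedl-Kant treats the two copies of each edge of $P$ independently, so after re-identification their two sets of bends must agree; in the worst case they disagree by a single right-angle turn. I would resolve this by locally rerouting at most two of the cut edges, adding at most one extra bend to each. This accounts exactly for the ``up to two edges that may have three bends'' clause in the statement and preserves the overall bound of $2n+4$ bends. Controlling this gluing overhead, and showing that the local reroute never introduces a strictly monotone cycle, is the step I expect to require the most care; everything else reduces to invoking the known guarantees of the Biedl-Kant construction on~$G'$.
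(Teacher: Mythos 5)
Your strategy---cut the annulus open along a path $P$ from the central to the outer face, draw the cut-open graph $G'$ orthogonally with Biedl--Kant, then wrap the rectangle around the origin and re-identify the two copies of $P$---is genuinely different from the paper's proof. The paper does not reduce to the orthogonal case at all: it picks non-adjacent vertices $s$ on the central face and $t$ on the outer face, takes an $st$-ordering $\langle s=v_1,\dots,v_n=t\rangle$, and places $v_i$ directly on the circle of radius $i$ of the ortho-radial grid, maintaining invariants that give each edge one bend per endpoint placement (plus two extra bends for one special edge at each of $v_1$ and $v_n$, whence $2n+4$ and the ``two edges with three bends'' clause). Because an actual drawing is produced, drawability is immediate and no appeal to Theorem~\ref{thm:basic-combinatorial} or to strictly monotone cycles is needed.

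As written, your reduction has gaps I do not see how to close. (1)~Biedl--Kant gives an orthogonal drawing of $G'$, but it does not let you prescribe that the outer boundary is a rectangle whose four sides are exactly $\partial f_c$, one copy of $P$, $\partial f_o$, and the other copy of $P$; and even granting that, the gluing requires the two copies of $P$ to be drawn as \emph{congruent} vertical chains---same bend sequence \emph{and} same $y$-coordinates of corresponding vertices and bends, since after wrapping these must coincide as radii. That is a metric condition no generic orthogonal drawing satisfies, and your claim that the mismatch is at most one right-angle turn on at most two edges is asserted, not argued. (2)~The cut duplicates the vertices of $P$, so $G'$ has more than $n$ vertices and the Biedl--Kant bound on $G'$ exceeds $2n+4$; you would also need to argue that bends on the two copies of $P$ are not double-counted after identification, and that $G'$ is in fact still biconnected. (3)~The validity check (no strictly monotone essential cycle) is precisely what separates the ortho-radial from the orthogonal setting, and you leave it as ``to be verified.'' To make the cut-and-glue route work you would need a drawing algorithm for $G'$ that treats the two boundary copies of $P$ symmetrically by construction---at which point you are essentially rebuilding the paper's layered $st$-ordering argument.
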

The proof, which is deferred to Appendix~\ref{apx:omitted-proof}, uses similar
constructions as in the
orthogonal case; see also Fig.~\ref{fig:bk-steps:template}. It seems plausible that the bound can be transferred
to non-biconnected graphs as in the work by Biedl and
Kant~\cite{Biedl1998}; as we use a different bound, we refrain from the
rather technical proof. Moreover, we insert $C_o$ to make the layout
independent from the choice of the reference edge $\eref$. This does
not impact the number of bends needed for the original part
of $G$, because we subdivide $\epsilon$ with sufficiently many 2-degree vertices
that can be bent for free. Further, as $\nu$ has degree 4, the
drawing cannot be bent at $\nu$.

Replacing each edge by $2n+4$ degree-2 vertices increases the size of
the graph drastically. However, the ILP can be solved much faster if
fewer subdivision vertices are used. Next, we describe a
pruning strategy that uses upper and lower bounds on the optimal
drawing to exclude central faces and to limit the number of
subdivision vertices and the number of times we solve the ILP.

We first compute the minimum number $U$ of bends that is necessary for
a bend-optimal orthogonal drawing of $\mathcal G$. This also bounds
the number of bends in a bend-optimal ortho-radial drawing of
$\mathcal G$. Hence, it is sufficient to subdivide each edge with $U$
vertices in Step~2. Initially, we run $\mathcal F_\mathrm{or}$ on each face $f$ of $\mathcal G$ as central face. This gives us a
lower bound $l_f$ for the bends in the case that $f$ is the central
face. In Step 3 we then consider the faces in increasing order of
their lower bounds. If the lower bound $l_f$ of the current face $f$
exceeds the upper bound $U$ we prune $f$ and continue with the next
face. Otherwise, we iteratively compute a valid ortho-radial
representation $\Gamma_f$ for $f$ as described in
Section~\ref{sec:bend-free} and update $U$ if it is improved by the
current solution. Further, when we update the ILP due to strictly
monotone cycles, we skip $f$ if its number of bends exceeds $U$ and
continue with the next face.

\section{Experimental Evaluation}
\label{sec:evaluation}
In this section we present our experimental evaluation which we have
conducted to show the potential of our approach as a general graph
drawing tool.

\begin{figure}[t]
  \centering
  \includegraphics[width=0.8\linewidth]{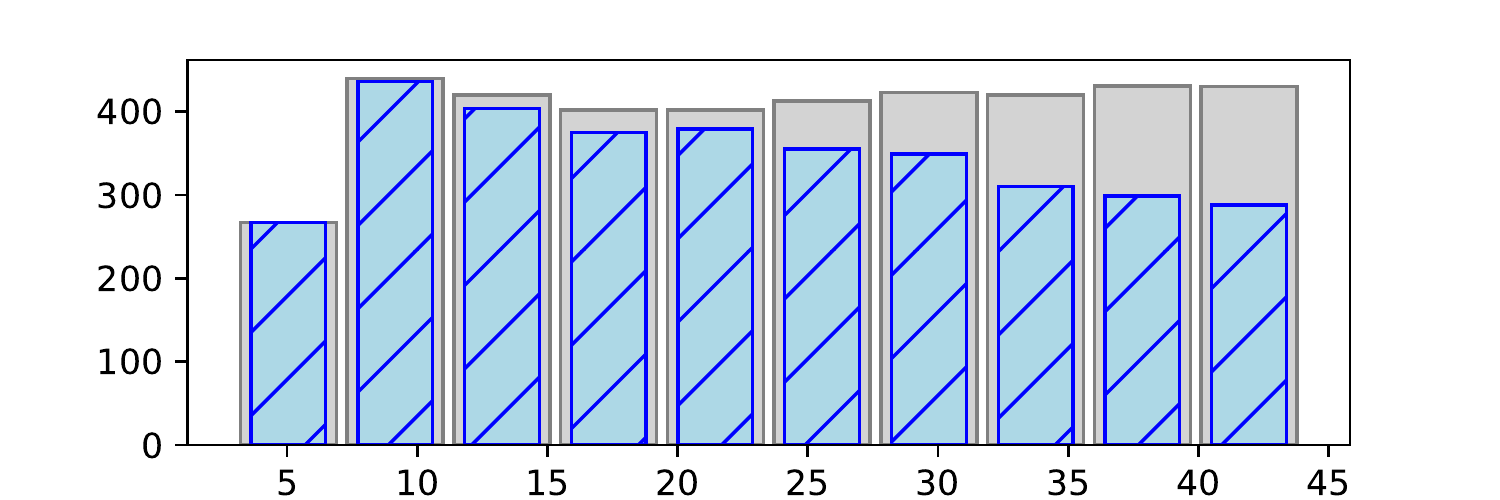}
  \caption{The node distribution (gray bars) of the graphs in $\mathcal I_\text{Rome}$
    distributed on 10 equally sized bins. The number of vertices ranges between $\minNumVertices$ and $\maxNumVertices$. The blue, tiled bars indicate the number of optimally solved instances.}
  \label{fig:eval:histogram}
\end{figure}

\subsection{Feasibility of Approach}
We first pursue the issue of whether our approach is feasible. It is
far from clear whether prohibiting strictly monotone cycles on demand
is practical, as we may need to insert an exponential number of
constraints into the ILP formulation.  To answer this question we have
conducted the first experiments on a subset of the \emph{Rome
  graphs}\footnote{\url{http://www.graphdrawing.org/data}},
which is a widely accepted benchmark set. We have replaced
  each vertex~$v$ with degree~$k>4$ with a cycle of $k$ vertices,
  which we connected to the neighbors of $v$ correspondingly.
Further, we applied a heuristic from OGDF~\cite{CGJKKM2013} to embed
the remaining graphs such that the size of the outer face is
maximized. We replaced all edge crossings with
  degree-4 vertices.  A preliminary analysis showed that the graphs
contain many degree-2 vertices. To ensure for the purpose of the
evaluation that our approach is forced to introduce bends with costs,
we normalized each instance by removing all degree-2 vertices. We only
considered instances up to 44
nodes. In total we obtained a set $\mathcal I_\text{Rome}$ of
\numInstances instances. Figure~\ref{fig:eval:histogram} shows the
size distribution of the resulting instances.  We implemented our
approaches in Python and solved the ILP formulations using
Gurobi~9.0.2~\cite{gurobi} using a timeout of 2 minutes in each
iteration. We ran the experiments on an Intel(R) Xeon(R) W-2125 CPU
clocked at 4.00GHz with 128 GiB RAM.

\newcommand{\romeGraph}[2]{\begin{minipage}{#2\textwidth}\includegraphics[width=\textwidth, page=#1]{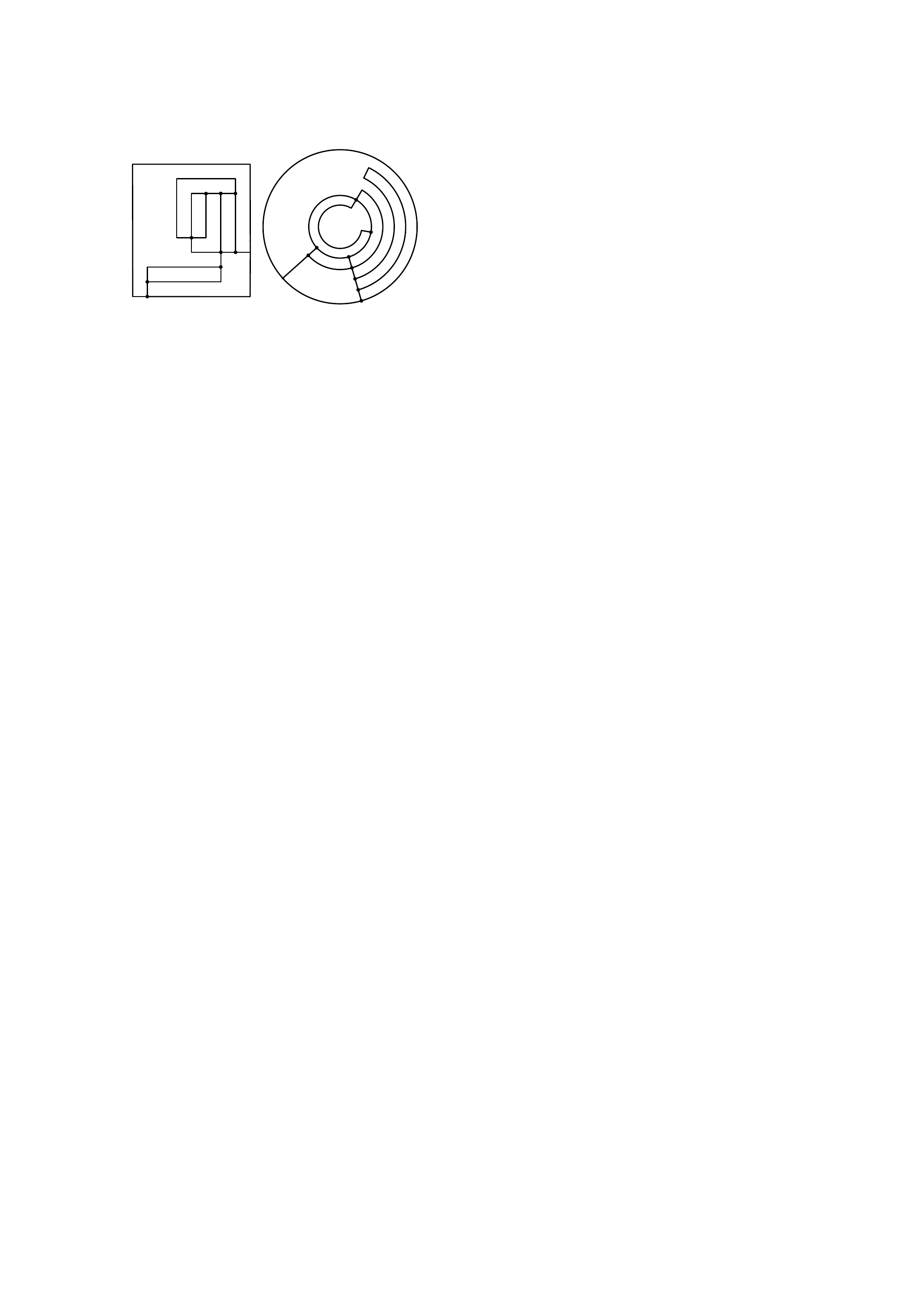}\end{minipage}}
\begin{figure}[t]
  \centering
  \romeGraph{1}{0.49}  \romeGraph{2}{0.49}\vspace{1ex}
  
  \romeGraph{3}{0.49}  \romeGraph{4}{0.49}
  \caption{Examples of
    bend-optimal orthogonal and ortho-radial drawings for the Rome graphs. The outer face was fixed, but the central face was optimized.  }
  \label{fig:examples:rome-graphs}
\end{figure}

For each of the instances in $\mathcal I_\text{Rome}$ we applied the
algorithm described in Section~\ref{sec:with-bends}; see
Fig.~\ref{fig:examples:rome-graphs} for four examples.  For
\optimallySolved instances we obtained bend-optimal ortho-radial
drawings. For \timeOut instances the solver returned a not
necessarily optimal result due to timeouts. The number of not optimally solved instances increases with the number of nodes; see Fig.~\ref{fig:eval:histogram} for more details.  For \timeLessHalfSecond
instances the algorithm took less than half a second. Only for
\timeMoreTenSeconds instance it took more than 10 seconds;
\timeMoreOneMinute of them took more than one minute. Further, when
searching for the best choice of the central face about
$\excludedFacesAvg\%$ of the faces are pruned in advance on average
due to exceeding upper bounds. Hence, for more than three quarters of the
faces we do not need to solve the ILP formulation, still guaranteeing
that we obtain a drawing with minimum number of bends.  Moreover, when the algorithm runs for a fixed central face, it needs less than
\numIterationsMaxAvg iterations on average until it finds a valid
ortho-radial representation. Put differently, we insert the
formulation $\mathcal F_C$ prohibiting a strictly monotone cycle $C$
into the ILP formulation \numIterationsMaxAvg times on
average. Altogether, the evaluation shows the practical feasibility of
the approach. It supports the rather strong hypothesis that
prohibiting strictly monotone cycles on demand is sufficient, but
considering all essential cycles is not necessary.

\begin{figure}[t]
  \begin{picture}(170,100)
\put(0,0){\includegraphics[width=0.49\textwidth]{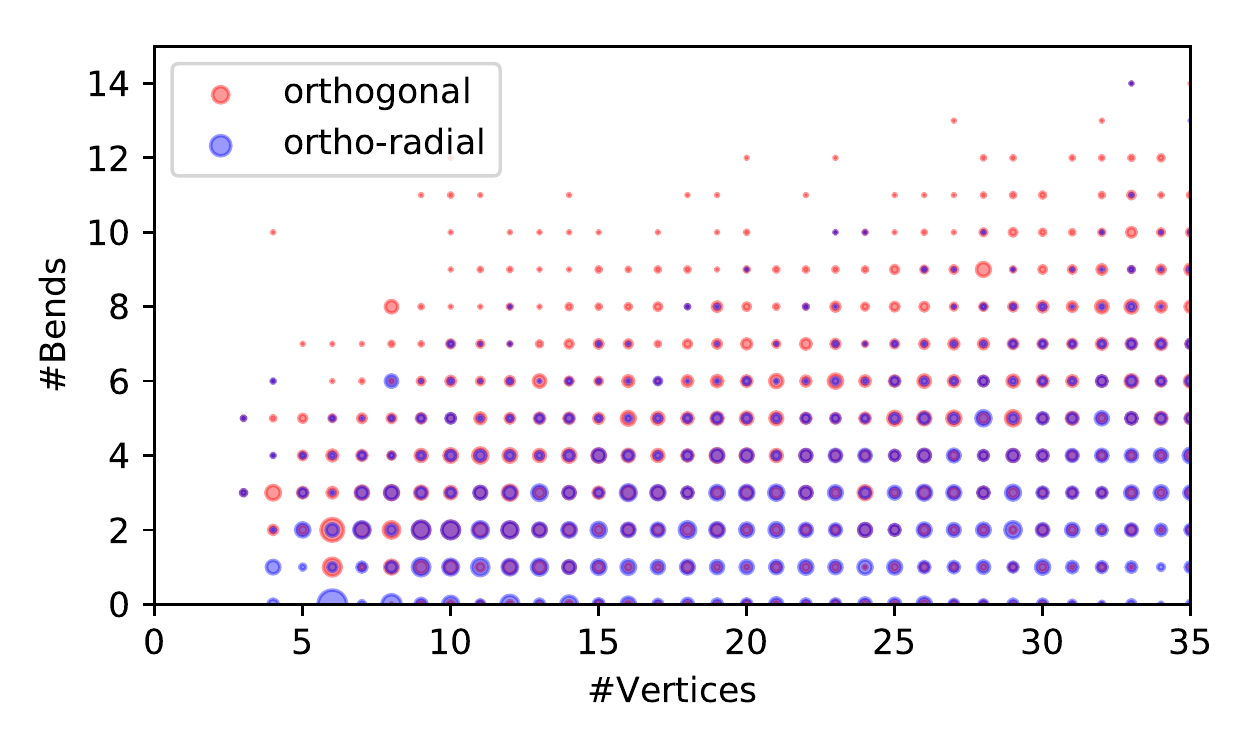}}
\put(0,90){a)}
\end{picture}
\begin{picture}(100,100)
\put(0,0){\includegraphics[width=0.49\textwidth]{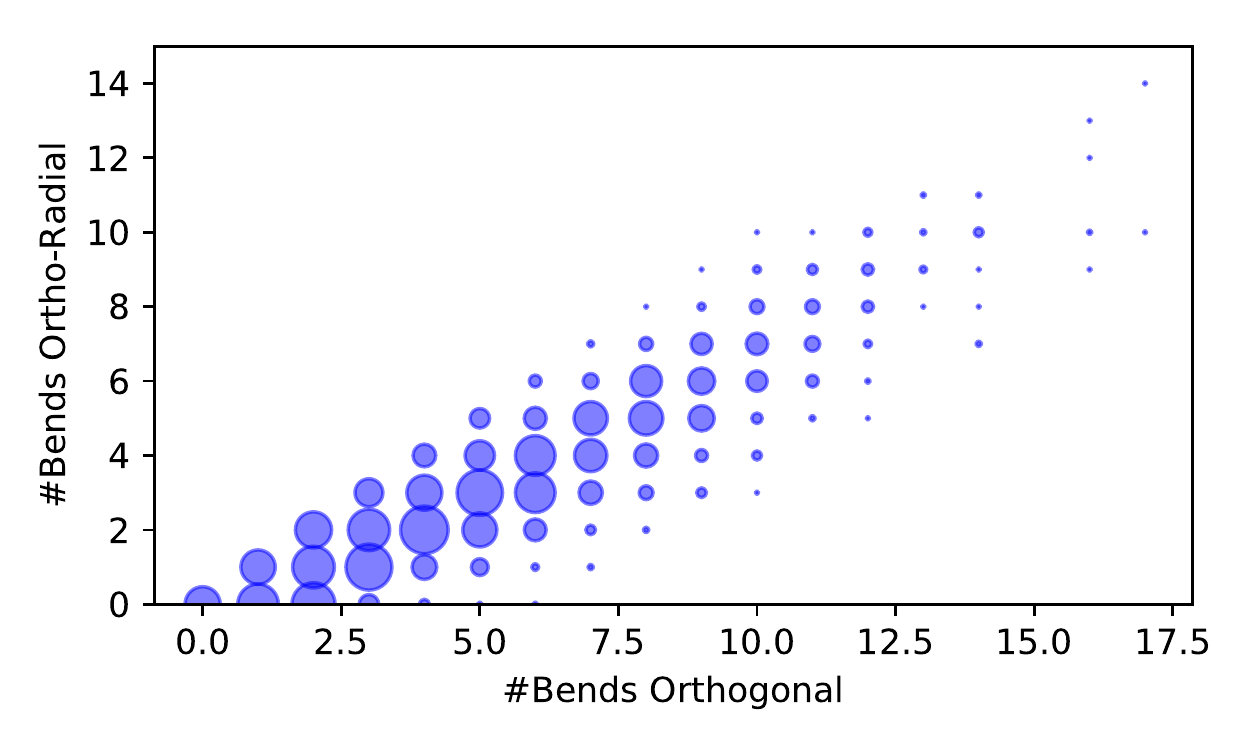}}
\put(0,90){b)}
\end{picture}
 \caption{Overview of the considered Rome graphs. A disk with radius $r$ and position $(x,y)$ corresponds to $r$ instances (a)~with $x$ vertices and $y$
   bends in the ortho-radial (blue) and the orthogonal (red) drawing, (b)~with $x$ bends in the orthogonal drawing and $y$ bends in the ortho-radial drawing.}
 \label{fig:plots}
\end{figure}

\subsection{Ortho-radial Drawings vs.~Orthogonal Drawings}
In this part we compare ortho-radial drawings with orthogonal drawings
with respect to the necessary number of bends. We expect a reduction
of the number of bends in an ortho-radial drawing compared with its
orthogonal drawing.

Fig.~\ref{fig:plots}a shows that independent of the size of
the graphs the ortho-radial drawings often have fewer bends than the
orthogonal drawings. Further, Fig.~\ref{fig:plots}b shows that
for many of the instances we achieve a reduction between 1 to 3 bends
in the ortho-radial drawings. To investigate this in greater detail we
consider for each instance $I\in \mathcal I_{\text{Rome}}$  the \emph{bend
  reduction}
 $ r_I = \frac{b_{\mathrm{og}}-b_{\mathrm{or}}}{b_{\mathrm{og}}}\cdot 100\%$,
 where $b_{\mathrm{og}}$ is the minimum number of bends of an
 orthogonal drawing of $I$ and $b_{\mathrm{or}}$ is the number of
 bends of the ortho-radial drawing created with our approach; note
 that for both drawings we assume the same embedding and the same
 outer face.  From this comparison we have excluded any instance with
 zero bends.  The bend reduction is $\reductionAvg\%$ on average and
 the median is at $\reductionMedian\%$. We emphasize that for
 \orZeroBends instances there are bend-free ortho-radial drawings,
 whereas only \orthogonalZeroBends admit bend-free orthogonal
 drawings.  Thus, our experiments support our hypothesis that
 ortho-radial drawings lead to a substantial bend reduction.

\begin{figure}[t]
  \centering
  \subfloat{\includegraphics[page=2,width=0.48\textwidth]{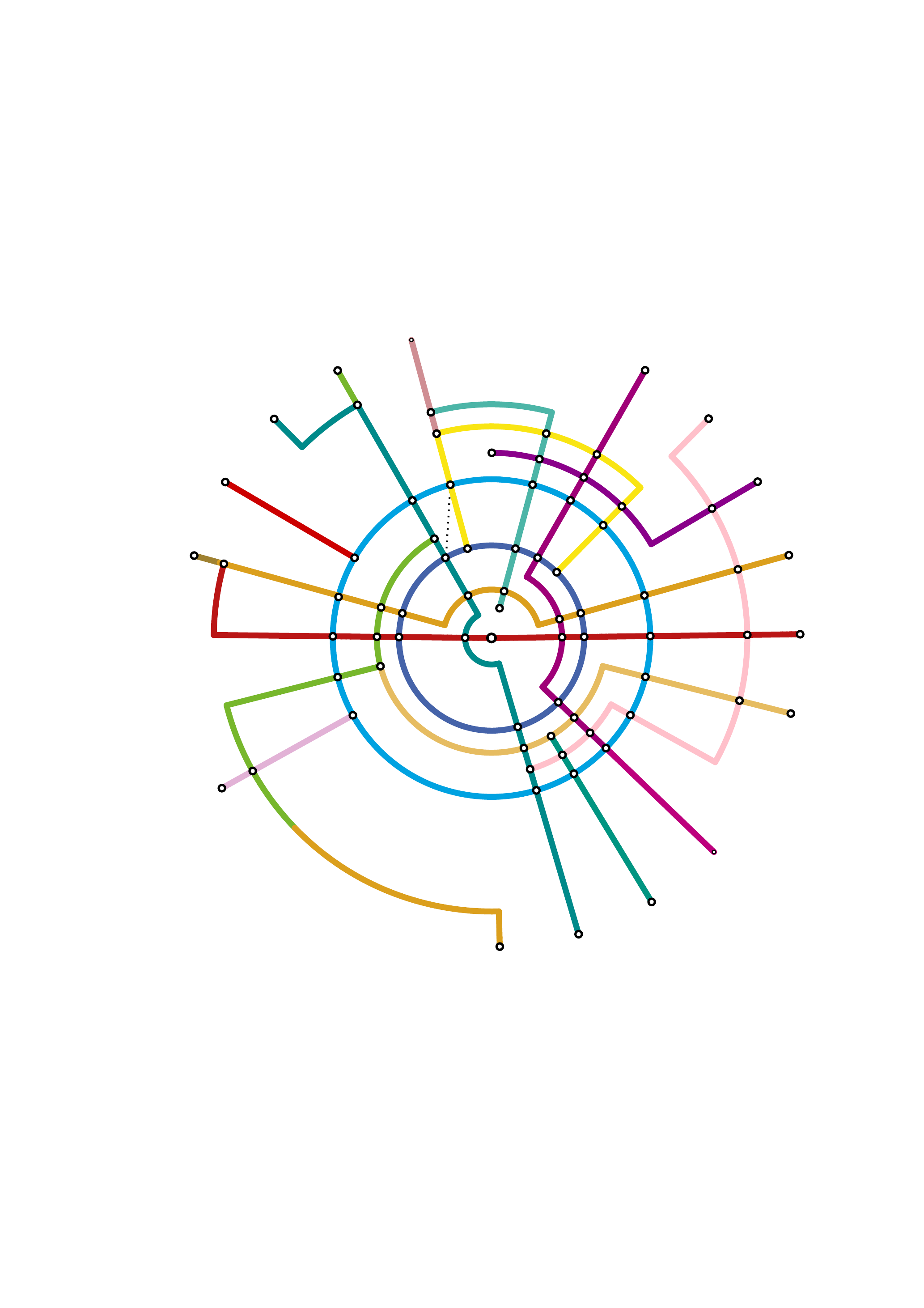}}\hfill
 \subfloat{\includegraphics[page=3,width=0.48\textwidth]{fig/beijing}}
 \caption{The metro system in Beijing, China. (a)~The input graph
   derived from vectorizing a metro map of Beijing. The outer and
   central faces are dashed. (b)~The ortho-radial layout induced by
   our approach within 7 seconds. }
  \label{fig:metro-map:beijing}
\end{figure}

\subsection{Case Study on Metro Maps}
Ortho-radial drawings are particularly used to represent metro
systems~\cite{wntrn-stlfdmhp-20}. We tested our algorithm on the metro
system of Beijing, which is a comparably large and complex transit
system; see Fig~\ref{fig:metro-map:beijing}. We have vectorized a
metro map of the city that shows 21 lines; for details see
Appendix~\ref{apx:vectorization}.  The created graph has 224 vertices,
289 edges and 67 faces. We fixed the central face by hand to
intentionally determine the appearance of the final layout.
We subdivided the edges such that each chain consisting of degree-2
vertices has at least three intermediate vertices. Our algorithm
created the layout shown in Fig.~\ref{fig:metro-map:beijing}b within
seven seconds. It has 21 bends. We emphasize that the outer loop line
is represented as a circle and the inner loop line has only two
bends. Altogether, the layout reflects the main geometric features of
the system well, although we have only optimized the number of bends,
e.g., outgoing metro lines are mainly drawn as straight-lines
emanating from the center.  In a second run, which took three minutes,
we proved that 21 bends is optimal.  Further metro systems are
found in Appendix~\ref{apx:examples}.

\section{Conclusion}
Barth et al.~\cite{bnrw-ttsmford-17} and Niedermann et
al.~\cite{nrw-eaorg-19} carried over the metrics step of the TSM
framework from orthogonal to ortho-radial drawings explaining how to
obtain such a drawing from a valid ortho-radial
representation. However, they let open how to transfer the shape step
constructing such a valid ortho-radial representation. We presented
the first algorithm that answers this question and creates
ortho-radial drawings, which are bend-optimal. Our
experiments showed its feasibility based on the Rome graphs and
different metro systems. This was far from clear due to the possibly
exponential number of essential cycles.

Altogether, we presented a general tool for creating ortho-radial
drawings.  We see applications in map making (e.g., metro
maps, destinations maps).  Possible future refinements include the
adaption of the optimization criteria both in the shape and metrics
step. For example in the shape step one could enforce certain bends to
better express the geographic structure of the transit
system. %

\bibliographystyle{splncs04}
\bibliography{orthoradial-ilp-full}

\appendix

\newpage

\begin{figure}[t!]
  \centering
  \includegraphics[page=1,scale=0.8]{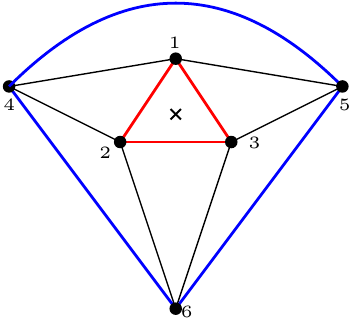}

  \includegraphics[page=2,scale=0.8]{fig/bk-algo} \hfil \includegraphics[page=3,scale=0.8]{fig/bk-algo}

  \includegraphics[page=4,scale=0.8]{fig/bk-algo} \hfil \includegraphics[page=5,scale=0.8]{fig/bk-algo}

  \includegraphics[page=6,scale=0.8]{fig/bk-algo} \hfil \includegraphics[page=7,scale=0.8]{fig/bk-algo}
  \caption{Example of the algorithm from Theorem~\ref{thm:bounds} applied to the
    octahedron graph.  The central face is marked with a cross, the
    numbers indicate the st-ordering that was used.}
    \label{fig:octahedron}
\end{figure}

\section{Omitted Proof of Section~\ref{sec:with-bends}}
\label{apx:omitted-proof}
\begin{theorem}
  Every biconnected plane 4-graph on $n$ vertices with designated
  central and outer faces has a planar ortho-radial drawing with at
  most $2n+4$ bends and at most two bends per edge with the exception
  of up to two edges that may have three bends.
\end{theorem}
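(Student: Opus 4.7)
The plan is to adapt the incremental drawing algorithm of Biedl and Kant~\cite{Biedl1998} from the orthogonal to the ortho-radial setting, using an $st$-numbering aligned with the designated central and outer faces. Concretely, since $G$ is biconnected and both $f_c$ and $f_o$ are fixed, I first pick a vertex $s$ on the central face and $t$ on the outer face and compute an $st$-numbering $v_1,\dots,v_n$ with $v_1=s$ and $v_n=t$; standard arguments guarantee that such a numbering exists and that each prefix $G_k=G[\{v_1,\dots,v_k\}]$ is connected, with all edges leaving $G_k$ on the outer boundary of $G_k$. Geometrically I will place $v_k$ on the $k$-th concentric circle of the ortho-radial grid, so that moving through the $st$-numbering corresponds to moving outward from $f_c$ toward $f_o$.

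The construction processes the vertices in $st$-order, maintaining as invariant that the partial drawing of $G_k$ is a valid ortho-radial drawing whose yet-to-be-connected stubs all point radially outward on the current frontier circle. When $v_k$ is added, its edges into $G_{k-1}$ are exactly the outward-pointing stubs it collects, and its edges to $G\setminus G_k$ become new outward stubs on the next circle. The local realization at $v_k$ follows the same case distinction as in the Biedl--Kant template shown in Fig.~\ref{fig:bk-steps:template}: depending on the number of incoming versus outgoing edges (at most $4$ in total because $G$ is a $4$-graph), one either extends existing stubs straight, introduces one right and one left turn, or bends a single edge twice. In every case the number of bends charged to $v_k$ is at most~$2$, so the interior of the drawing contributes at most $2(n-2)$ bends. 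Validity (no strictly monotone cycle) comes for free by Theorem~\ref{thm:basic-combinatorial}, since the procedure outputs an actual drawing rather than just a representation.

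Two features are specific to the ortho-radial setting and account for the additive slack. First, the drawing must wrap around the origin: the first vertex $v_1$ is placed on the innermost circle and its incident edges must be distributed consistently around $f_c$, which I will handle by a small constant number of initial bends independent of~$n$. Second, and more delicate, when $v_n=t$ is added, closing the outer face $f_o$ forces the last frontier to meet up around the full $2\pi$ of the outer circle; this may require an extra bend on at most two of the closing edges, yielding the exceptional edges with three bends stated in the theorem. Adding the constant overhead from the initial placement of $s$ and the final closure at $t$ to the per-vertex contribution of $2(n-2)$ gives the claimed bound of at most $2n+4$ bends, with every edge carrying at most two bends except for those up to two closing edges.

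The main obstacle is the last paragraph: unlike the orthogonal case, the outer boundary after step $n-1$ is a cycle around $f_c$ rather than a monotone path, so standard Biedl--Kant closure does not directly apply. I expect to argue case-by-case on the rotation number of the frontier chain (which must equal $-4$ around $f_o$) that at most two ``repair'' bends suffice, analogous to how Biedl--Kant handles the topmost vertex but accounting for the cyclic identification of the angular coordinate. The remaining steps—existence of a suitable $st$-numbering, the local templates, and the $2(n-2)$ interior bound—are routine transfers of the orthogonal argument once the ortho-radial frame is fixed, and the running example in Fig.~\ref{fig:octahedron} on the octahedron illustrates all ingredients in a concrete instance.
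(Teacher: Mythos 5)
Your proposal follows essentially the same route as the paper's own proof: an $st$-ordering from a central-face vertex $s$ to an outer-face vertex $t$, placement of $v_i$ on the $i$-th concentric circle, Biedl--Kant-style local templates with outward-pointing stubs maintained as an invariant, and the extra additive bends charged to the special edges at $v_1$ and $v_n$. The only real difference is that you anticipate a delicate case analysis for closing the drawing at $v_n$, whereas the paper simply treats $v_n$ with a last-vertex template symmetric to the first-vertex one (positioning it above an incoming edge so that the outer face lies outside), which is where the up-to-two three-bend edges come from.
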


\begin{proof}
  Let $G$ be a biconnected 4-plane graph and let $s,t$ be two vertices
  incident to the central and outer face, respectively, that are not
  adjacent.  Let $\langle s=v_1,\dots,v_n=t\rangle$ be an st-ordering
  of $G$.

  \begin{figure}[tb]
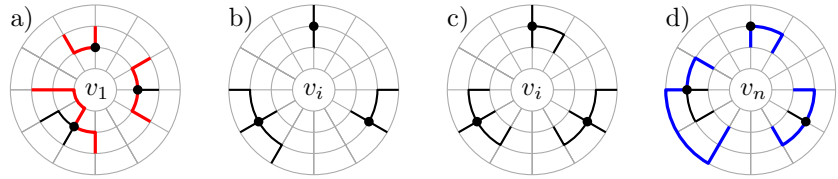

    \centering
    \includegraphics[page=1]{fig/bk-steps}\hfil
    \includegraphics[page=2]{fig/bk-steps}\hfil
    \includegraphics[page=3]{fig/bk-steps}\hfil
    \includegraphics[page=4]{fig/bk-steps}
    \caption{Illustration of the incremental drawing step.  Drawing of
      the first vertex $v_1$, depending on its degree (a); the edges
      incident to the central face are thick and red.  Drawing of the
      intermediate vertices depending on their in- and outdegrees
      (b,c).  Drawing of the last vertex $v_n$ depending on its degree
      (d); the edges incident to the outer face are thick and blue.}
    \label{fig:bk-steps}
  \end{figure}

  Our construction iteratively places each vertex $v_i$ onto the
  circle with radius~$i$ around the origin; the construction is
  illustrated in Figure~\ref{fig:octahedron}. Edges where both
  endpoints have already been placed are \emph{drawn}, edges where
  exactly one endpoint has already been placed, are \emph{partial}.
  Note that, when inserting an edge, at least one partial edge becomes
  drawn, and at most three edges become partial.

  Throughout, we maintain the invariant that (i) all drawn edges
  (except at most one incident to $v_1$ and~$v_n$, each) have at most
  two bends , (ii) they are contained in the disk with radius $i$, and
  (iii) the half-edges are drawn as stubs with at most one bend such
  that (iv) only an outward-directed segment lies outside the disk
  with radius $i$, and these end at the circle with radius $i+0.5$.
  Moreover, since we have an $st$-ordering, for each $i<n$ the
  vertices $v_1,\dots,v_i$ and the vertices $v_{i+1},\dots,v_n$ induce
  connected subgraphs.  Therefore the cut $C$ separating them is
  represented by a simple curve in the dual.  We further require (v)
  that our drawing respects the planar embedding in the sense that the
  circular order of the stubs around this circle is the same as the
  order in which the they are intersected by the cycle that is dual
  to~$C$.

  Depending on the degree of~$v_1$, we draw it together with its
  half-edges as illustrated in Fig.~\ref{fig:bk-steps}a; where we
  choose the directions in which the edges leave $v$ so that the edges
  incident to the central face are drawn as indicated by the thick
  blue curves.  This clearly establishes properties (i)--(v).  Suppose
  $1<i<n$.  Let $u_1,\dots,u_k$ with $1\le k \le 4$ be the neighbors
  of $v_i$ that are already drawn.  By property~(v) and the fact that
  we have an st-ordering, it follows that the ends of the half-edges
  from the $u_j$ to $v_i$ are consecutive around the circle with
  radius $i-0.5$.  Depending on the in- and out-degree, we position
  $v$ in the outward direction above one of its incoming edges and
  draw the outgoing edges as illustrated in
  Fig.~\ref{fig:bk-steps}b,c, where the spokes that contain the
  outgoing edges of $v$ are newly created left and right of the spoke
  that contain~$v_i$, and the remaining spokes are slightly squeezed
  to make sufficient space.  All remaining stubs are simply extended
  by one unit in the outward direction.  Finally, we palce the vertex
  $v_n$ as illustrated in Fig.~\ref{fig:bk-steps}d; making sure that
  it is positioned in the outward direction above one of its incoming
  edges in such a way that the correct faces lies on the outside.

  Clearly each edge, except for one edge incident to $v_1$ and~$v_4$
  receive at most two bends (one when its first vertex is drawn, and
  one when the second vertex is drawn.  Moreover, each vertex causes
  bends on at most two of its incident vertices, which yields an upper
  bound of at most $2n$ bends.  Moreover, the special edges incident
  to $v_1$ and~$v_n$ receive two additional bends, which yields the
  claimed total of $2n+4$.
\end{proof}

\section{Metro Maps}

\subsection{Vectorization of the Metro System of Beijing}\label{apx:vectorization}
We vectorized the metro system of Beijing as follows. 
We connected crossings between metro lines and terminal stations by
paths consisting of degree-2 vertices. We refrained from modeling the
intermediate stations as degree-2 vertices; one may distribute them on
the sections of the metro lines after creating the ortho-radial
layout. The metro system has only one degree-5 vertex. We resolved
this vertex by reconnecting one of the five edges to one additional
vertex subdividing a neighboring edge.  Further, we have replaced the
station Tiananmen East, which lies in the center of the city, by a
cycle of three edges. We have fixed this cycle as the boundary of the
central face. Further, we have connected the terminal stations of the
outgoing lines by a cycle enclosing the entire system; we fixed this
cycle as the boundary of the outer face.  The resulting graph has 224
vertices, 289 edges and 67 faces.

\subsection{Additional Examples}\label{apx:examples}
In Figure~\ref{fig:metro-map:cologne} and
Figure~\ref{fig:metro-map:london} we present ortho-radial layouts for
the metro systems of Cologne, Germany and London, UK. We extracted the
graphs to obtain large examples for our feasibility study. In
particular, we do not claim that the layouts correctly represent the
transit systems. Especially for the London metro system we resolved
several degree-5 vertices modeling them as cycles. Hence, as we do not
impose any restrictions on the layout, geographically close stations
may be placed far away from each other in the layout. We deem the task
of transferring the algorithm to produce reliable metro maps to be an
engineering problem that should be tackled in future work.
\begin{figure}[t]
  \centering
  \subfloat{\includegraphics[page=1,width=0.48\textwidth]{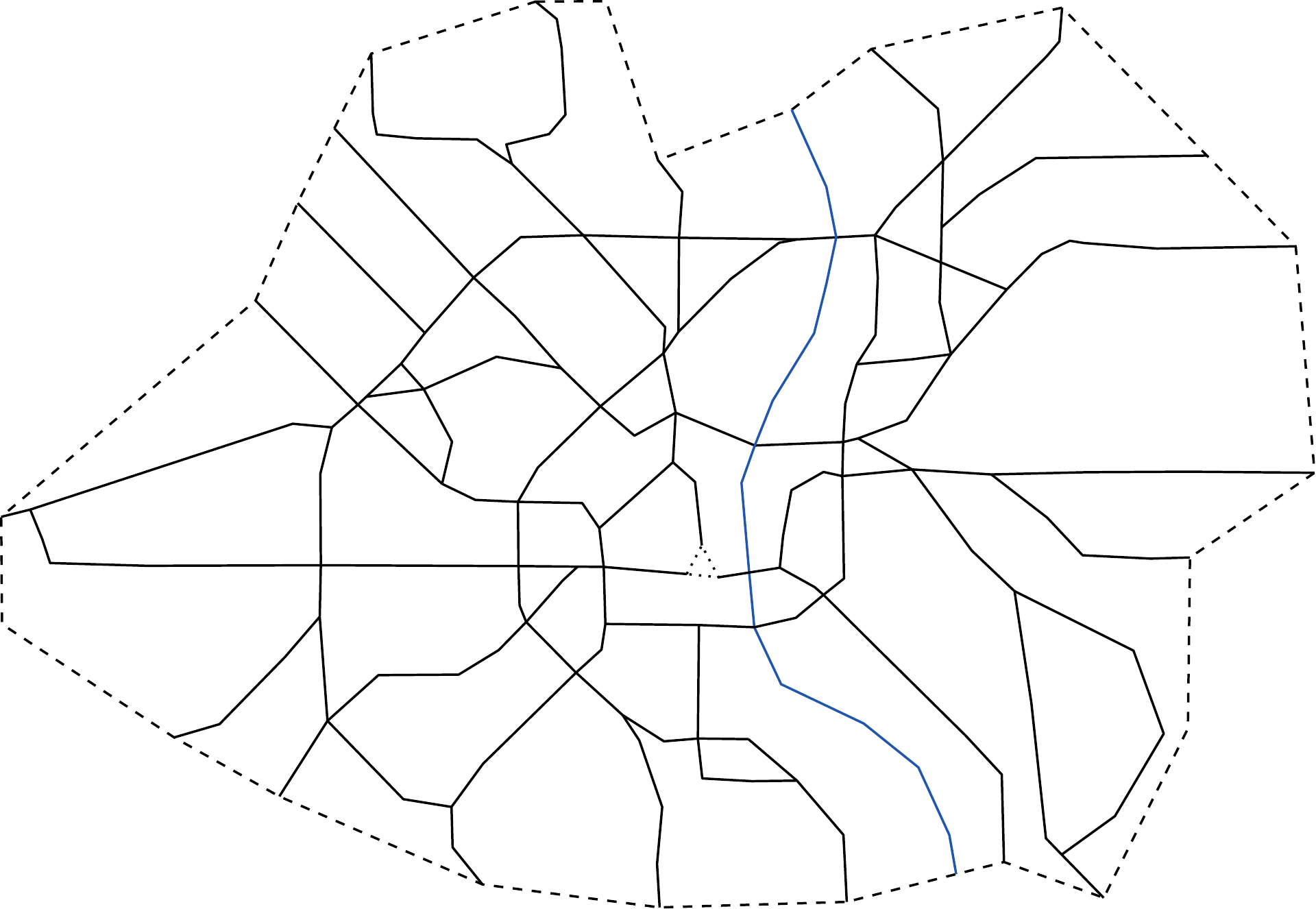}}\hfill
 \subfloat{\includegraphics[page=2,width=0.48\textwidth]{fig/cologne}}
 \caption{The metro system of Cologne, Germany. The river Rhine, which
   passes through the city, is marked blue. (a)~The input graph
   derived from vectorizing the official metro map of Cologne. We
   replaced the station \emph{Heumarkt}, which lies in the center of
   the city, with a cycle and fixed this as the central face. The
   outer and central faces are dashed. The graph has 177 vertices, 231
   edges and 56 faces.  (b)~The ortho-radial layout produced by our
   approach within 2 seconds. It has 18 bends. }
 \label{fig:metro-map:cologne}
   \subfloat{\includegraphics[page=1,width=0.48\textwidth]{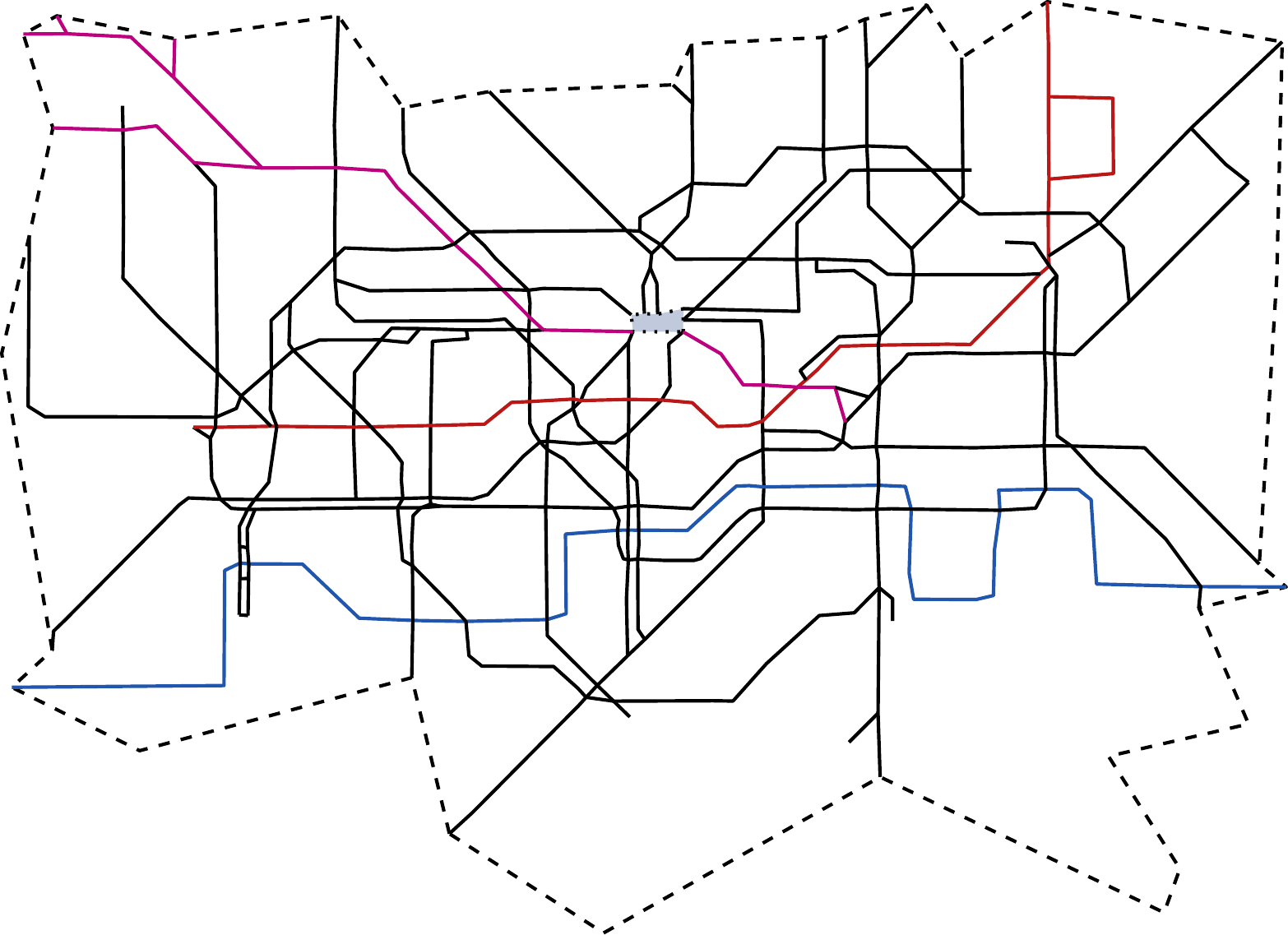}}\hfill
 \subfloat{\includegraphics[page=2,width=0.48\textwidth]{fig/london}}
 \caption{The metro system of London, UK. We have marked two lines by
   their color and the river Thames (blue) to give
   orientations. (a)~The input graph derived from vectorizing the
   official metro map of London. We modeled the area around
   \emph{King's Cross} with a cycle and fixed this as the central
   face. The outer and central faces are dashed. The graph has 398
   vertices, 530 edges and 134 faces. (b)~The ortho-radial layout
   produced by our approach within 9 seconds. It has 62 bends.}
  \label{fig:metro-map:london}
\end{figure}

\end{document}